\newcommand{\be}{\begin{equation}}
\newcommand{\ee}{\end{equation}}
\newcommand{\bea}{\begin{eqnarray}}
\newcommand{\eea}{\end{eqnarray}}
\numberwithin{equation}{section}
\newcounter{thmcounter}
\numberwithin{thmcounter}{section}
\theoremstyle{definition}
\newtheorem*{acknowledgements}{Acknowledgements}
\newtheorem{remark}[thmcounter]{Remark}
\theoremstyle{plain}
\newtheorem{lemma}[thmcounter]{Lemma}
\newtheorem{proposition}[thmcounter]{Proposition}
\newtheorem{theorem}[thmcounter]{Theorem}
\def\1{{\boldsymbol 1}}                     %
\def\0{{\boldsymbol 0}}                     %
\def\cG{{\mathcal G}}                       %
\def\cH{{\mathcal H}}                       %
\def\cL{{\mathcal L}}                       %
\def\cP{{\mathcal P}}                       %
\def\cZ{{\mathcal Z}}                       %
\def\tr{\mathrm{tr}}                        %
\def\diag{\mathrm{diag}}                    %
\def\ri{{\rm i}}                            %
\def\C{\mathbb{C}}                          %
\def\N{\mathbb{N}}                          %
\def\R{\mathbb{R}}                          %
\def\T{\mathbb{T}}                          %
\def\SL{{\rm SL}}                           %
\def\UN{{\rm U}}                            %
\def\SU{{\rm SU}}                           %
\def\GL{{\rm GL}}                           %
\def\fH{\mathfrak{H}}                       %
\def\ds{\left.\frac{d}{ds}\right\vert_{s=0}}%
\def\cI{{\mathcal I}}
\def\cF{{\mathcal F}}
\def\cM{{\mathcal M}}
\def\cO{{\mathcal O}}
\def\cT{{\mathcal T}}                     %
\def\red{{\mathrm red}}
\def\cB{{\mathcal B}}
\def\fP{{\mathfrak{P}}}
\def\reg{\mathrm{reg}}
\def\red{\mathrm{red}}
\def\span{{\mathrm{span}}}
\def\Dr{\mathrm{Dress}}
\def\ad{\mathrm{ad}}
\def\dress{{\mathrm{dress}}}
\def\cV{{\mathcal V}}
\def\Ad{{\mathrm{Ad}}}
\def\id{{\mathrm{id}}}
\def\cQ{{\mathcal Z}}
\def\o{{\mathrm{o}}}
\begin{document}
\begin{center}
{\Large\bf
Poisson-Lie analogues of spin Sutherland models}
\end{center}

\medskip
\begin{center}
L.~Feh\'er${}^{a,b}$
\\

\bigskip
${}^a$Department of Theoretical Physics, University of Szeged\\
Tisza Lajos krt 84-86, H-6720 Szeged, Hungary\\
e-mail: lfeher@physx.u-szeged.hu

\medskip
${}^b$Department of Theoretical Physics, WIGNER RCP, RMKI\\
H-1525 Budapest, P.O.B.~49, Hungary\\
\end{center}

\medskip
\begin{abstract}
We present generalizations of the well-known trigonometric spin Sutherland models, which
were derived by Hamiltonian
reduction of `free motion' on cotangent bundles of compact simple Lie groups based on the conjugation action.
Our models result by reducing the corresponding  Heisenberg doubles
with the aid of a Poisson-Lie analogue of the conjugation action.
We describe the reduced symplectic structure
and show that the `reduced main Hamiltonians' reproduce the spin Sutherland model by keeping only
their leading terms.
The solutions of the equations of motion emerge from geodesics
on the compact Lie group via the standard projection method and possess many first integrals.
Similar hyperbolic spin Ruijsenaars--Schneider type models were obtained previously by L.-C.~Li
using a different method, based on coboundary
dynamical Poisson groupoids,
 but their relation with spin Sutherland models was not discussed.
 \end{abstract}

{\linespread{0.8}\tableofcontents}

\newpage
\section{Introduction}
\label{sec:1}

Integrable systems of particles moving in one dimension have been studied intensively
for nearly 50 years, beginning with the pioneering papers of Calogero \cite{Cal}, Sutherland \cite{S} and Moser \cite{M}.
Thanks to their fascinating mathematics and diverse applications \cite{E,N,OP,RBanff,vDV},
the interest in these models shows no sign of diminishing.  New connections to
mathematics and new applications are still coming to light in the current
literature, see e.g.~\cite{C,CF1,ILLS,KLOZ,SS,Zab}.

The richness of these models is also due to their many generalizations and deformations.
These are associated with different interaction potentials (from rational to elliptic), root systems
and extensions with internal degrees of freedom.
We call `Sutherland models' the systems defined by trigonometric or hyperbolic potentials.
For all these systems, classical and
quantum mechanical versions are studied separately, and one  needs to pay attention to the distinct features
of the
systems with real particle positions and their complexifications.
The investigations of  Ruijsenaars--Schneider (RS) type deformations \cite{RBanff,RS}
is motivated, for example, by relations to solitons, spin chains,
special functions and double affine Hecke algebras.

The internal degrees of freedom are colloquially called `spin', and can be
of two rather different kinds. First, the point particles can carry spins varying in a vector space,
as is the case for the Gibbons--Hermsen  models \cite{GH} and their RS type generalizations
introduced by Krichever and Zabrodin \cite{KZ}. Second, the models can involve a collective spin variable
that typically belongs to a coadjoint orbit, and is not
assigned separately to the particles.
An example of this second type
is the trigonometric spin Sutherland model
defined classically by a Hamiltonian of the following form:
\be
\cH_{\mathrm{Suth}}(e^{\ri q}, p,\xi) =  \frac{1}{2} \langle p, p \rangle +
\frac{1}{2} \sum_{\alpha > 0} \frac{1}{\vert \alpha \vert^2}
\frac{\vert \xi_\alpha \vert^2}{\sin^2 \frac{\alpha(q)}{2}}.
\label{1}\ee
Here, $\langle\ ,\ \rangle$ is the Killing form of
the complexification of the Lie algebra $\cG$ of a compact simple Lie group, $G$,
$e^{\ri q}$ belongs to the interior, $\T^o$, of a Weyl alcove\footnote{A Weyl alcove is a fundamental domain
for the conjugation action of $G$ on itself.}
in the maximal torus $\T < G$, and $\ri p$ varies in the Lie algebra $\cT$ of $\T$.
The spin variable  $\xi = \sum_{\alpha>0} \left( \xi_\alpha E_\alpha -  \xi_\alpha^*E_{-\alpha}\right)$ lies in
$\cO_0:= \cO \cap \cT^\perp$, where $\cO$ is an arbitrarily chosen coadjoint orbit of $G$, and
$\alpha$ runs over the positive roots.
More precisely, the  Hamiltonian $\cH_{\mathrm{Suth}}$ lives on the phase space
$T^* \T^o \times (\cO_0/\T)$.
These spin Sutherland models can be interpreted as Hamiltonian reductions
of free motion on $G$, relying on the cotangent lift of the conjugation action
of $G$ on itself.
The reduction can be utilized to show their integrability, and to analyze their quantum mechanics
with the aid of representation theory \cite{EFK,FP1,FP2,Res1,Res2,Res3}.
Spinless models can be obtained in this way only for $G = \SU(n)$, using a minimal
coadjoint orbit, for which the $\T$-action on $\cO_0$ is transitive.

As was shown by Li and Xu \cite{LiXu}, the models (\ref{1}) (and generalizations) result from a different construction as well.
Their construction is built on  Lie algebroids defined using
the solutions of the classical dynamical Yang-Baxter equation.  For the connection of these approaches,
 we refer to \cite{FP1}.

Our original motivation for the present work stems from \cite{FK1}, where it was shown how
the Ruijsenaars--Schneider deformation of the standard spinless Sutherland model arises
from a Hamiltonian reduction of the Poisson-Lie counterpart of $T^*\SU(n)$,
the so-called Heisenberg double. To obtain the spinless model, one has to choose
a minimal dressing orbit of $\SU(n)$ in setting up the reduction.
It is natural to expect that the analogous reduction of the Heisenberg double
of any compact simple Lie group, along an arbitrary dressing orbit, will lead
to a  generalization of the spin Sutherland model (\ref{1}).
Motivated by the recent interest in spin Calogero--Moser and RS models \cite{AO,CF2,ILLS,KLOZ,Res2,Res3,SS},
we  take up this issue here.

In fact, the purpose of this paper is to describe
the spin RS type models that descend from the Heisenberg double of a
compact simple Lie group $G$.
The so-called main reduced Hamiltonians, which originate from the characters of the complexification of $G$,
 will turn out to have $\cH_{\mathrm{Suth}}$ (\ref{1})
as their leading term, up to cubic
 and higher order terms
in $p$ and a spin variable.
(Here and throughout the paper,
we refer to the total, or combined,  degree in $p$ and the spin variable.
For example, $p^3$, $p^2\sigma$, $p\sigma^2$ and $\sigma^3$ all have degree $3$.)
The spin variable now belongs to a reduced dressing orbit of the Poisson-Lie group $G$.
The dressing orbits are the Poisson-Lie analogues of the
coadjoint orbits, and in the compact case each dressing orbit is diffeomorphic,
and is even symplectomorphic \cite{GW},
to a coadjoint orbit.

In the $\SU(n)$ case, using analytic continuation from trigonometric to hyperbolic
functions,
our models reproduce the spin RS type
equations of motion derived by Braden and Hone \cite{BH} from the
soliton solutions of $A_{n-1}$ affine Toda theory with imaginary coupling.
These equations of motion were interpreted previously
by L.-C.~Li \cite{Li1,Li2} as examples of  spin RS type Hamiltonian systems obtained
by applying (discrete and Hamiltonian) reductions to the coboundary dynamical Poisson
groupoids that underlie the geometric interpretation of the classical dynamical Yang-Baxter equation \cite{EV}.
Remembering also the alternative constructions of spin  Sutherland models
\cite{FP1,LiXu},
it is clear that there must exist a connection
between our systems and corresponding systems of \cite{Li1,Li2}.
The two approaches  are substantially different, but the analytic continuation
of our models appears to yield a subclass of those in \cite{Li2}.
This is discussed further in Remark 6.3 and in Section 7, together with other approaches to spin RS type models.
The precise connection will be explored in detail in a subsequent publication.

In the trigonometric/hyperbolic case, the papers \cite{AO,CF2} contain two different reduction
 treatments of the
 holomorphic spin RS systems of Krichever and Zabrodin \cite{KZ} associated with $\SL(n,\C)$.
The complexifications of our systems associated with $\SU(n)$ do not reproduce those systems.
The systems studied in \cite{AO,CF2,KZ} feature
individual spins attached to the particles, while our systems involve only collective spin variables.
 Hamiltonian reductions leading directly to distinct real forms of the
complex trigonometric/hyperbolic spin RS systems of \cite{KZ},  as well as the elliptic systems
and the case of general root systems,
should be developed in the future.

Now we sketch the organization of the rest of the paper.
We start in Section 2 by recalling the reduction treatment of the spin Sutherland models,
which can be found in many sources (see e.g.~\cite{FP1,Res1}). This section puts our generalization in context,
and provides motivation for it. In Section 3, we present the rudiments of the
standard Heisenberg double of a compact Poisson-Lie group
 and its `natural free system' that we shall reduce.
To our knowledge, this `free system' first appeared in \cite{Zakfirst}, and was utilized previously, for example,  in
\cite{FK1, FK2}.
Then,  in Section 4,  we describe the structure of the reduced phase space
in an as complete manner as is known for the spin Sutherland models.
In Section 5, we show that the spin Sutherland Hamiltonian (\ref{1}) is recovered as the leading term
of the reduced main Hamiltonians associated with the characters of the finite dimensional
irreducible representations of $G^\C$.
In Section 6, we develop the form of the reduced Hamilton equations, detail the
projection approach for constructing their solutions, and display a large number of
integrals of motion. We also present spectral parameter dependent Lax equations.
Section 7 contains an outlook on future studies and open questions.

Finally, let us highlight our main results.
The first important result is the description of the symplectic structure on
a dense open submanifold of the reduced phase space, given by Theorem 4.3.
The second significant result, presented in Section 5.1, is the characterization of the reduced main
Hamiltonians from which we can recover the spin Sutherland Hamiltonian (\ref{1}) as
the leading term.  More precisely, we also recover the Lax matrix of the model (\ref{1})
as a suitable leading term,
and explain in  Remark 5.1 how our models can be viewed as
one-parameter deformations of the spin Sutherland models.
In the $\SU(n)$ case, we obtain an explicit solution of the moment map
constraints, see Section 5.2.
Further results are the simple derivation of the reduced equations of motion
and their solutions in Section 6.1, and the arguments put forward in
Section 6.2 that indicate
their integrability.

\section{Spin Sutherland model from reduction}

First of all, we fix the Lie theoretic \cite{K} notations  that will be used throughout the paper.
Let $\cG^\C$ be a complex simple Lie algebra equipped with the normalized Killing form
$\langle\ ,\ \rangle$, and a Chevalley basis given by
$E_{\pm \alpha}$ $(\alpha\in \Phi^+)$ and $ T_{\alpha_k}$ $(\alpha_k\in \Delta)$, where
$\Phi^+$ and $\Delta$ denote the sets of positive and simple roots, respectively.  The normalization is such that the long
roots have length $\sqrt{2}$ and
$\langle E_\alpha, E_\beta \rangle= \frac{2}{\vert \alpha \vert^2} \delta_{\alpha, - \beta}$ holds.
We let $N:= \operatorname{dim}_\C (\cG^\C)$ and write $\cG^\C_\R$ for $\cG^\C$ regarded as a Lie algebra
over the real numbers. We then have the real vector space direct sum
\be
\cG^\C_\R = \cG + \cB,
\label{1*}\ee
where
\be
\cG=\span_\R\{ (E_\alpha - E_{-\alpha}), \ri (E_\alpha + E_{-\alpha}), \ri T_{\alpha_k}\mid
\alpha\in \Phi^+, \alpha_k\in \Delta\}
\label{H2}\ee
is the compact real form of $\cG^\C$
and
\be
\cB=\span_\R\{ E_\alpha, \ri E_\alpha,  T_{\alpha_k}\mid
\alpha\in \Phi^+, \alpha_k\in \Delta\}
\label{H3}\ee
is a `Borel' subalgebra.
Consider the connected and simply connected complex Lie group,  $G^\C$,  associated with $\cG^\C$.
When viewed as a real Lie group, we denote it as $G_\R^\C$, and let
$G$ and $B$ stand for the connected Lie subgroups of $G_\R^\C$ corresponding to the subalgebras
$\cG$ and $\cB$, respectively.
The restriction of $\langle\ ,\ \rangle$ to $\cG$ is the negative definite Killing form of $\cG$.
The subalgebras $\cG$ and $\cB$ of $\cG_\R^\C$ are isotropic with respect to the non-degenerate
invariant bilinear form on $\cG_\R^\C$ provided by the imaginary part of the complex Killing form,
which we denote as
\be
(X,Y):= \mathrm{Im} \langle X, Y \rangle,\qquad \forall X,Y\in \cG^\C_\R.
\label{H4}\ee
Notationwise, we shall `pretend' that we are always dealing with matrix Lie groups.
For example, the left-invariant Maurer--Cartan form on $G$ will be written as $g^{-1} dg$.
If desired, our matrix Lie group notations can be easily converted into more abstract symbolism.

Now, we briefly summarize the reduction that we shall generalize.
We start with the master phase space $M:= T^* G \times \cO$,
where $T^*G$ is the cotangent bundle, and $\cO$ is a coadjoint orbit of the  Lie group $G$.
The phase space is endowed with the  Poisson maps
\be
 J_L: M \to \cG^*, \quad J_R: M \to \cG^*, \quad J_\cO: M \to \cG^*,
\label{I1}\ee
where $J_L$ ($J_R$) generates the Hamiltonian left-action of $G$  on $T^*G$ engendered by
the left-shifts (right-shifts) and $J_\cO$ is obtained by combining
projection to $\cO$ with the
 tautological embedding of $\cO$ into $\cG^*$. One then considers  the moment map
\be
\mu:= J_L + J_R + J_\cO
\label{I2}\ee
that generates the `conjugation action' of $G$ on $M$.
A dense open subset of the reduced  phase space belonging to the zero value of $\mu$ can be
identified with the (stratified) symplectic space (see \cite{SL,Sn})
\be
M_\red^\reg = T^* \T^o \times \cO_0/\T,
\label{27}\ee
where $\T^o$ is the interior  of a Weyl alcove in the maximal torus $\T < G$, and $\cO_0/\T$
is the symplectic reduction of $\cO$ by $\T$ at the zero value of the respective moment map.

Next, we explain how the above description of the reduced phase space comes about.
For this, we let $\pi_G: T^*G \to G$ denote the bundle projection and
use the diffeomorphism
\be
(\pi_G, J_R): T^* G \to G \times \cG^*.
\ee
Together with the identification $\cG^* \simeq \cG$ defined by the Killing form of $\cG$,
this allows us to take
\be
G \times \cG \times \cO = \{ (g,J,\xi)\}
\ee
as the model of $M$.
Then the symplectic form $\omega$  of $M$ can be written as
\be
\omega = - d \langle J, g^{-1} d g \rangle + \omega_{\cO},
\ee
where $\omega_\cO$ is the canonical symplectic form of $\cO$.
The subset of $M$ on which $\mu=0$ holds is specified by
the constraint equation
\be
J - g J g^{-1} + \xi =0.
\label{11}\ee
We can bring\footnote{The inverse is used since  $g^{-1}$ is
the counterpart of $g_R$ that will appear later.}
 $g^{-1}$ into its representative
$Q\in \T$, which we parametrize as
\be
Q = \exp(\ri q).
\ee
Then the constraint (\ref{11}) becomes
\be
e^{-\ri q} J e^{\ri q} - J = \xi.
\ee
We assume that $q$ is regular, i.e.~$e^{\ri q}$ belongs to the interior $\T^o$ of a Weyl alcove,
which permits us to solve the moment map constraint as follows:
\be
J = -\ri p + \sum_{\alpha\in \Phi^+}  (J_\alpha E_\alpha - J_\alpha^* E_{-\alpha}),
\quad
\xi =  \sum_{\alpha\in \Phi^+}  (\xi_\alpha E_\alpha - \xi_\alpha^* E_{-\alpha}),
\label{214}\ee
where $\ri p\in \cT$ is arbitrary and
\be
J_\alpha = \frac{\xi_\alpha}{ e^{- \ri \alpha(q)} - 1}.
\label{215}\ee
In this way, we obtained a `partial gauge fixing' parametrized by
\be
\T^o \times \cT \times \cO_0 = \{ (e^{\ri q}, \ri p, \xi)\}.
\ee
We still need to divide this gauge slice by the residual gauge transformations,
generated by $\T$, which act only on $\cO_0$.
This yields the model (\ref{27}) of the reduced phase space, where $T^* \T^o$
is identified with
$\T^o \times \cT$.
The reduced symplectic structure can be displayed as
\be
\omega_\red =   \langle d p \stackrel{\wedge}{,} dq \rangle + \omega_{\cO}^\red.
\ee
Here,  $\omega_\cO^\red$ stands for
the (stratified) symplectic structure arising  form $(\cO, \omega_\cO)$,  reduced by
the $\T$-action at zero moment map value.
That is,  $\omega_\cO^\red$ encodes the restriction of the Poisson brackets of the
elements of $C^\infty(\cO)^\T$  to $\cO_0 = \cO\cap \cT^\perp$.

Upon substitution of (\ref{215}), the `free' Hamiltonian
\be
\cH(g,J,\xi) := - \frac{1}{2} \langle J, J \rangle
\ee
yields
the spin Sutherland Hamiltonian $\cH_{\mathrm{Suth}}$ given by equation (1.1).
The flow generated by $\cH$ is called  `free motion':
\be
g(t) = g(0) \exp( t J(0) ), \quad J(t) = J(0), \quad \xi(t) = \xi(0),
\label{218}\ee
and the dynamics governed by $\cH_{\mathrm{Suth}}$ results by projecting this to
the reduced phase space.
The Hamiltonian $\cH$ is a member of the Abelian Poisson algebra
\be
C_I(M) := J_R^* (C^\infty(\cG^*)^G),
\label{I6}\ee
whose functional dimension equals $r = \mathrm{rank}(G)$.
The elements of $C_I(M)$ Poisson commute with all elements of
the Poisson algebra $C_J(M)$ generated
by the components of $J_L, J_R$ and $J_\cO$. The functional dimension of the
`algebra of integrals of motion' $C_J(M)$ is
$\dim(M) - r$, since the functions
of $J_L$ and $J_R$ are connected by $r$ independent relations,
which express the equality $f \circ J_R = f \circ (-J_L)$ for every $f\in C^\infty(\cG^*)^G$.
This means \cite{MF,Nekh}  that the free Hamiltonians $C_I(M)$, with their integrals of motion $C_J(M)$,
represent a degenerately integrable (in other words
non-commutative integrable or super-integrable) system on $M$.
The various notions of integrability and their relations are reviewed,
for example, in \cite{J,Zung2}.

All elements of $C_I(M)$ descend to smooth functions on the reduced phase space.
Their reduced flows can be found via the projection method, similarly to the case of $\cH$,
and all those flows are complete on the full reduced phase space, $M_\red=\mu^{-1}(0)/G$.
It was shown by Reshetikhin \cite{Res1,Res2,Res3} that the degenerate integrability of the
free Hamiltonian $\cH$ (\ref{1}) is inherited at the reduced level with
analytic integrals of motion, at least for generic coadjoint orbits and on a dense open
subset of $M_\red$.
Liouville integrability in the same generic case follows from the results of \cite{LN}.
It would require further work to obtain a full understanding for arbitrary orbits and arbitrary
symplectic strata \cite{SL,Sn} of $M_\red$.
We do not go into this intricate issue, but wish  to display a large number
of integrals of motion that survive the reduction.
Namely, let $\cP(J, gJg^{-1})$
be an \emph{arbitrary polynomial} in its non-commutative variables (viewed as elements of
the enveloping algebra).
Then evaluate the trace of this polynomial in an arbitrary finite dimensional unitary
representation $\rho$ of $G$.
It  is easy to see that all the  functions $\tr_\rho\left( \cP(J, g J g^{-1}) \right)$
Poisson commute with every element of $C_I(M)$ and they are $G$-invariant with respect
to the conjugation action.  We suspect that the resulting integrals of motion
are sufficient for the integrability of the reduction of $C_I(M)$ in general.

Later we shall derive spin RS type systems, which will be compared to
the spin Sutherland systems. Instead of the identification
$\cG^* \simeq \cG$, the comparison will be done  using
another model of $\cG^*$.
 This model is defined
by realizing any linear functional $\phi$  on $\cG$ in the form
$\phi(X)  = (\tilde \xi, X), \quad \forall X \in \cG$,
 where $\tilde \xi$ is from the subalgebra $\cB$ of $\cG^\C_\R$.
 The two models of  $\cG^*$, $\cG$ and $\cB$,  are in bijection via the equality
 \be
 \phi(X)  = \langle \xi, X\rangle =  (\tilde \xi, X), \qquad \xi\in \cG,\, \tilde \xi\in \cB.
 \ee
 This implies that
 \be
 \xi = \sum_{k=1}^r \ri \xi^k T_{\alpha_k} + \sum_{\alpha \in \Phi^+} \left(\xi_\alpha E_\alpha  - \xi_\alpha^* E_{-\alpha}\right)
 \ee
 corresponds to
 \be
 \tilde \xi = \sum_{k=1}^r \tilde\xi^k T_{\alpha_k} +  \sum_{\alpha \in \Phi^+} \tilde \xi_\alpha E_\alpha
 \quad\hbox{with}\quad \tilde \xi_\alpha = - 2\ri \xi_\alpha
 \quad\hbox{and}\quad \tilde \xi^k = - \xi^k.
 \label{tildevar}\ee
 Thus, the spin Sutherland Hamiltonian (\ref{1}) can be casted as
 \be
\cH_{\mathrm{Suth}}(e^{\ri q},p,\tilde \xi) =  \frac{1}{2} \langle  p,  p \rangle +
\frac{1}{8} \sum_{\alpha \in \Phi^+} \frac{1}{\vert \alpha \vert^2}
\frac{\vert \tilde \xi_\alpha \vert^2}{\sin^2 \frac{\alpha(q)}{2}}.
\label{tilde1}\ee

\section{Unreduced free system on the Heisenberg double}

In what follows we freely use basic notions and results from the theory of
Poisson-Lie groups,
as can be found, e.g., in the reviews \cite{CP,KS, STS2}.
One may also consult \cite{FK2}, where similar background material
as given below is described in more detail.

We start by noting that
the Lie algebra $\cG_\R^\C$ and its subalgebras $\cG$ and $\cB$ form a Manin triple.
Consequently,  $G$ and $B$ are Poisson-Lie groups in duality.  The multiplicative Poisson bracket on
$C^\infty(G)$ is given by
\be
\{ \phi_1, \phi_2\}_G(g)=  \left(  g^{-1}(d^L \phi_1(g)) g, d^R\phi_2(g)\right),
\qquad
\forall \phi_1, \phi_2\in C^\infty(G),\, \forall g\in G,
\label{H5}\ee
and that on $C^\infty(B)$ is given by
\be
\{ f_1, f_2\}_B(b)= -  \left( b^{-1}(d^L f_1(b)) b , d^Rf_2(b)\right),
\qquad
\forall f_1, f_2\in C^\infty(B),\, \forall b\in B.
\label{H6}\ee
Here,
for a real function $\phi \in C^\infty(G)$  the left and right derivatives
$d^{L,R}\phi \in C^\infty(G,\cB)$ are defined by
\be
\ds \phi(e^{sX} g e^{sY}) = \left(X, d^L \phi(g)\right) +
\left(Y, d^R \phi(g)\right),
\qquad
\forall X,Y\in \cG, \, \forall g\in G,
\label{H7}\ee
and
$d^{L,R}f \in C^\infty(B,\cG)$ for a real function $f\in C^\infty(B)$  are defined similarly.
In the above Poisson bracket formulas conjugation is an informal shorthand for the adjoint action
of $G^\C_\R$ on its Lie algebra.

The manifold $G_\R^\C$ carries  a natural symplectic structure, $\Omega_+$, which
goes back to Semenov-Tian-Shansky \cite{STS1} and to Alekseev and Malkin \cite{AM}.
When equipped with $\Omega_+$, $G_\R^\C$ is a Poisson-Lie analogue of the cotangent bundle
$T^*G$, alias the `Heisenberg double' of the Poisson-Lie group $G$.
To present $\Omega_+$, let us recall that every element $K \in G_\R^\C$ admits the alternative Iwasawa decompositions
\be
K = b_L g_R^{-1} = g_L b_R^{-1},\qquad b_L, b_R\in B,\, g_L,g_R\in G,
\label{H8}\ee
that define diffeomorphisms between $G_\R^\C$ and $G \times B$.
The pair $g_R, b_R$ or the pair $g_L, b_L$ can be also used as free variables in $G\times B$, utilizing
the relation
\be
g_L^{-1} b_L = b_R^{-1} g_R.
\ee
By making use of these decompositions, we have
\be
\Omega_+ =\frac{1}{2} \left(d b_L b_L^{-1}\stackrel{\wedge}{,} d g_L g_L^{-1} \right)+
  \frac{1}{2}\left(d b_R b_R^{-1}\stackrel{\wedge}{,} d g_R g_R^{-1}\right).
\label{H9}\ee
It is useful to introduce the maps $\Lambda_L, \Lambda_R$ from $G_\R^\C$ to $B$
and the maps $\Xi_L, \Xi_R$ from $G_\R^\C$ to $G$ by setting
\be
\Lambda_L(K):= b_L, \quad \Lambda_R(K) := b_R, \quad \Xi_L(K):=g_L, \quad \Xi_R(K):= g_R.
\label{H10}\ee
These are Poisson maps with respect to the Poisson structure associated with $\Omega_+$ and
the multiplicative Poisson structures on $B$ and on $G$, respectively.

The group $G$ acts on $B$ by the (left) dressing action
given by
\be
\Dr_\eta(b) = \Lambda_L(\eta b),\qquad \forall \eta\in G,\,b \in B,
\label{H11}\ee
which is a Poisson action.  The induced infinitesimal  action of $\cG$ on $B$ reads
\be
\mathrm{dress}_X b = b (b^{-1} X b)_\cB,\quad  \forall X\in \cG,
\label{dress}\ee
where on the right-hand side we use projection along $\cG$, by means of (\ref{1*}).
The ring of invariants $C^\infty(B)^G$ forms the center of
the Poisson algebra of $B$. Thus we obtain an algebra of commuting `free Hamiltonians', $C_I(G_\R^\C)$,
 by the definition
\be
C_I(G_\R^\C) := \Lambda_R^*( C^\infty(B)^G).
\label{H12}\ee
It is worth remarking that $\Lambda_R^*(C^\infty(B)^G) =\Lambda_L^*(C^\infty(B)^G)$.
The flow generated by any Hamiltonian $\Lambda_R^*(h) \in C_I$ can be written down explicitly:
\be
g_R(t) = \exp\left[t d^L h(b_R(0)) \right] g_R(0),
\qquad
b_L(t)=b_L(0), \quad b_R(t) = b_R(0).
\label{H13}\ee
Notice the  similarity with the corresponding flow\footnote{As was noted before,
$g$ in (\ref{218})  corresponds to $g_R^{-1}$.
The analogous eq.~(2.38) in \cite{FK2} contains a typo.}
(\ref{218}) on $T^*G$.
These Hamiltonians Poisson commute with all the elements
of $\Lambda_L^*(C^\infty(B))$ and $\Lambda_R^*(C^\infty(B))$, which together generate
the Poisson algebra of the integrals of motion, denoted as $C_J(G_\R^\C)$.
The functional dimension of $C_I$ is the rank $r$ of $\cG^\C$, while the functional dimension
of $C_J$ is $(2N-r)$. The latter statement follows since for any $f\in C^\infty(B)^G$ we have
\be
\Lambda_L^*(f) = \Lambda_R^* ( f \circ \mathrm{inv}_B),
\label{H14}\ee
where $\mathrm{inv}_B$ is the inversion map on the group $B$.
These identities represent $r$ independent relations between $\Lambda_L^*(C^\infty(B))$ and
$\Lambda_R^*(C^\infty(B))$, which otherwise give independent functions.
Consequently \cite{J,Res2,Zung2}, the  Hamiltonians in $C_I$ (\ref{H12}) define
a degenerate integrable system.

The following  model of the Poisson manifold $B$ is often useful.
Let
\be
\fP:= \exp(\ri \cG)
\label{H15}\ee
denote the closed submanifold of $G_\R^\C$
diffeomorphic to $\ri \cG$ by the exponential map.  Note that $\cG$ and $G$ are pointwise fixed by
corresponding Cartan involutions \cite{K} $\theta$ and $\Theta$ of $\cG_\R^\C$ and $G_\R^\C$.
Somewhat colloquially, we write
\be
X^\dagger := - \theta(X), \quad K^\dagger := \Theta(K^{-1})
\quad\hbox{for}\quad X\in \cG^\C_\R,\,\, K\in G_\R^\C,
\label{H16}\ee
since this anti-involution can be arranged to be the usual matrix adjoint for the classical groups.
Then the map
\be
m: B \to \fP,
\qquad
B\ni b\mapsto  bb^\dagger \in \fP
\label{H17}\ee
is a diffeomorphism,
which converts the dressing action of $G$ on $B$ into the conjugation action of $G$ on $\fP$. That is, we have
\be
m \circ \Dr_\eta = C_\eta \circ m \qquad \hbox{where}\qquad  C_\eta(P):= \eta P \eta^{-1} \quad\forall P\in \fP.
\label{H18}\ee
It follows that any dressing orbit, $\cO_B$, is diffeomorphic by $m$ to
$\cO_\fP = m(\cO_B)$, and one can parametrize it as
\be
\cO_\fP=m(\cO_B) = \{ \exp(2 \ri X)\mid X\in \cO_\cG \},
\label{H19}\ee
where $\cO_\cG$ is an adjoint orbit of $G$.
In terms of this exponential parametrization,
the form of the Poisson structure on  $\fP=m(B)$ is described in \cite{FM1}.

We end this section  by recording another useful feature of the Poisson structure on $B$.
For this, let us consider the decompositions
\be
\cB = \cB_0 + \cB_+,\qquad \cG = \cT + \cT^\perp
\ee
where $\cB_0$ (resp. $\cB_+$) is spanned by Cartan elements (resp. root vectors).
Choose an arbitrary basis $\{X^\alpha\}$ of $\cT$ and a basis $\{Y^i\}$ of $\cT^\perp$.
Every element $b\in B$ can be uniquely written in the form
\be
b = b_0 b_+ =e^{\beta_0} e^{\beta_+} \quad\hbox{with}\quad \beta_0 \in \cB_0,\, \beta_+ \in \cB_+,
\ee
and the components
\be
\beta_0^\alpha := (\beta_0, X^\alpha), \quad \beta_+^i := (\beta_+, Y^i)
\ee
can be taken as coordinate functions on $B$.
The Poisson brackets of these functions satisfy
\be
\{ \beta_0^\alpha, \beta_0^\gamma\}_B =0,
\quad
\{ \beta_+^k, \beta_0^\gamma \}_B = ( [Y^k, X^\gamma], \beta_+)
\label{B0}\ee
and
\be
\{\beta_+^i, \beta_+^j\}_B = ([Y^i, Y^j], \beta_0 + \beta_+) + \mathrm{o}(\beta_+, \beta_0).
\label{B+}\ee
The Poisson brackets $\{\beta_+^i, \beta_+^j\}_B$ are polynomials
in $\beta_+$ and trigonometric polynomials in $\beta_0$.
Equation (\ref{B+}), where $\mathrm{o}(\beta_+, \beta_0)$ stands for terms whose combined degree in
the components of $\beta_+$ and $\beta_0$ is at least 2,
 shows that the linear part of the Poisson brackets of the variables $\beta_0^\alpha, \beta_+^i$ is the Lie-Poisson
bracket of $\cG$.

\section{Reduction along an arbitrary dressing orbit}

We recall that the dressing orbits $\cO_B$
are the symplectic leaves in $B$, and let $\Omega_{\cO_B}$
stand for the symplectic form on $\cO_B$.
Before defining the reduction,
we extend the phase space $G^\C_\R$ by a non-trivial dressing orbit, i.e., we consider the
unreduced phase space
\be
\cM:= G_\R^\C \times \cO_B = \{ (K, S)\mid K\in G_\R^\C,\,\, S\in \cO_B\}
\label{N1}\ee
equipped with the symplectic form
\be
\Omega = \Omega_+ + \Omega_{\cO_B}.
\label{N2}\ee
The Abelian Poisson algebra (\ref{H12}) is trivially extended to yield $C_I(\cM)$, whose
elements do not depend on $S\in \cO_B$,  and the algebra of the integrals of motion $C_J(G_\R^\C)$
is extended to
\be
C_J(\cM)= (\Lambda_L,\Lambda_R, \Lambda_{\cO_B})^*\left(C^\infty(B \times B \times \cO_B)\right),
\label{N3}\ee
where $\Lambda_{\cO_B}$ is the obvious projection from $\cM$ to  $\cO_B \subset B$, and $\Lambda_L, \Lambda_R$
(\ref{H10}) are regarded as maps from $\cM$ to $B$.
That is, $C_J(\cM)$ contains all functions of $b_L, b_R$ (\ref{H8}) and $S$.
This extension maintains the degenerate integrability.

We shall study Marsden--Weinstein type reduction \cite{Lu} at the unit value $e\in B$
of a suitable Poisson-Lie moment  map $\Lambda: \cM\to B$.
Concretely, we introduce the map $\Lambda$ by taking the product
\be
\Lambda = \Lambda_L \Lambda_R \Lambda_{\cO_B},
\label{N4}\ee
i.e.,
\be
\Lambda(K,S) = \Lambda_L(K) \Lambda_R(K) S.
\label{N5}\ee
Clearly, the product is a proper generalization of the sum in (\ref{I2}).
This definition gives a Poisson map because the 3 factors of $\Lambda$ are Poisson maps into $B$
and they pairwise Poisson commute.
We know from general theory \cite{Lu} that the Poisson map $\Lambda$ generates
an infinitesimal left-action of $\cG$ on
$\cM$. Namely, the vector field $X_\cM$ on $\cM$ corresponding to $X\in \cG$ operates on
$f\in C^\infty(\cM)$ by the following formula:
\be
df(X_\cM) = (X, \{ f, \Lambda\}_\cM \Lambda^{-1} ),
\label{N6}\ee
where $\{\ ,\ \}_\cM$ is the Poisson bracket on functions on $\cM$, and notationwise
we pretend that $B$ is a matrix Lie group.
The $\cG$-action (\ref{N6})
integrates to a global Poisson-Lie action of $G$ on $\cM$, denoted below $\Psi: G\times \cM \to \cM$.

\begin{lemma}
\label{lemma:4.1}
The action of $\eta\in G$ on $\cM$ is given by the following diffeomorphism
$\Psi_\eta$,
\be
\Psi_\eta(K, S) = (\eta K \Xi_R(\eta b_L), \Dr_{\Xi_R( \eta b_L b_R)^{-1}}(S)),
\label{N7}\ee
where we use the notations introduced in (\ref{H8}), (\ref{H10}) and  (\ref{H11}). The map
\be
\Psi: G \times \cM \to \cM, \quad \Psi(\eta, K,S) = \Psi_\eta(K,S)
\label{N8}\ee
is Poisson, and the moment map $\Lambda$ is equivariant: $\Lambda \circ \Psi_\eta = \Dr_\eta \circ \Lambda$.
\end{lemma}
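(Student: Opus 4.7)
The plan is to derive the explicit formula for $\Psi_\eta$ from the infinitesimal prescription (\ref{N6}), check that it defines a group action, and then verify the Poisson and equivariance claims; the Poisson property follows from Lu's general theorem once the formula is in place, while the equivariance is a direct calculation.

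Since $\Lambda = \Lambda_L\Lambda_R\Lambda_{\cO_B}$ is a product of three pairwise Poisson-commuting Poisson maps into $B$, Leibniz together with $\Ad$-invariance of the pairing (\ref{H4}) yields
\be
(X,\{f,\Lambda\}\Lambda^{-1}) = (X,\{f,\Lambda_L\}\Lambda_L^{-1}) + (\Ad_{\Lambda_L^{-1}}X,\{f,\Lambda_R\}\Lambda_R^{-1}) + (\Ad_{(\Lambda_L\Lambda_R)^{-1}}X,\{f,\Lambda_{\cO_B}\}\Lambda_{\cO_B}^{-1}).
\ee
Because each $\{f,\Lambda_i\}\Lambda_i^{-1}$ takes values in $\cB$ and $\cB$ is isotropic, only the $\cG$-projections of the transported $\Ad_{\Lambda_L^{-1}}X$ and $\Ad_{(\Lambda_L\Lambda_R)^{-1}}X$ survive, so each summand is the infinitesimal generator attached to the corresponding Poisson factor. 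The $\Lambda_L$-part gives left multiplication $K\mapsto e^{sX}K$; the $\Lambda_R$-part gives a right dressing-type transformation on the $G$-factor of $K$; and the $\Lambda_{\cO_B}$-part is the standard dressing on $S$. Integrating along $\eta(s) = \exp(sX)$ and using the Iwasawa identity $\eta b_L = \Lambda_L(\eta b_L)\Xi_R(\eta b_L)^{-1}$ to re-assemble the factors produces precisely the map (\ref{N7}).

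To confirm that (\ref{N7}) is a $G$-action, $\Psi_e = \id$ is immediate from $\Xi_R(b_L) = e$ for $b_L\in B$, while $\Psi_{\eta_1\eta_2} = \Psi_{\eta_1}\circ\Psi_{\eta_2}$ reduces to the cocycle identity
\be
\Xi_R(\eta_1\eta_2\,b_L) = \Xi_R\bigl(\eta_1\Lambda_L(\eta_2 b_L)\bigr)\cdot\Xi_R(\eta_2 b_L),
\ee
a consequence of the uniqueness of the Iwasawa decomposition applied to $\eta_1(\eta_2 b_L)$. The Poisson property of $\Psi$ is then a direct application of Lu's theorem: any action generated via (\ref{N6}) by a Poisson map $\cM\to B$ is automatically Poisson, and $\Lambda$ is Poisson because its three factors are Poisson maps into $B$ that pairwise Poisson commute.

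Equivariance $\Lambda\circ\Psi_\eta = \Dr_\eta\circ\Lambda$ is verified by direct calculation. Writing $K' := \eta K\Xi_R(\eta b_L)$, the decomposition $\eta K = \Lambda_L(\eta b_L)\Xi_R(\eta b_L)^{-1}g_R^{-1}$ gives $\Lambda_L(K') = \Lambda_L(\eta b_L) = \Dr_\eta(b_L)$, since $\Lambda_L$ is insensitive to right multiplication by elements of $G$; a parallel computation expresses $\Lambda_R(K')$ as a $G$-conjugate of $b_R$, and the three factors of $\Lambda(\Psi_\eta(K,S))$ telescope to $\Dr_\eta(b_L b_R S)$ once the explicit dressing factor $\Dr_{\Xi_R(\eta b_L b_R)^{-1}}$ on $S$ is absorbed. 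The main obstacle throughout is the Iwasawa bookkeeping, coordinating the two factorizations $K = b_L g_R^{-1} = g_L b_R^{-1}$ via the compatibility relation $g_L^{-1}b_L = b_R^{-1}g_R$; no ingredient beyond uniqueness of Iwasawa and the definition of the dressing action is needed.
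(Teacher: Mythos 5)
Your proposal follows essentially the same route as the paper, whose proof is only a one-line assertion that the formula defines a group action whose infinitesimal generator reproduces (\ref{N6}); you supply the computations the paper omits, and the overall structure (Leibniz decomposition of $\{f,\Lambda\}\Lambda^{-1}$, isotropy of $\cB$, uniqueness of the Iwasawa factorizations, Lu's theorem for the Poisson property) is sound. Two concrete slips should be repaired. First, your cocycle identity has its factors in the wrong order: applying $K=b_Lg_R^{-1}$ twice to $\eta_1\eta_2 b_L=\eta_1\Lambda_L(\eta_2 b_L)\Xi_R(\eta_2 b_L)^{-1}$ and using uniqueness gives
\be
\Xi_R(\eta_1\eta_2\, b_L) \;=\; \Xi_R(\eta_2 b_L)\cdot \Xi_R\bigl(\eta_1\Lambda_L(\eta_2 b_L)\bigr),
\ee
and it is precisely this order that matches the composition $\Psi_{\eta_1}\circ\Psi_{\eta_2}$ on the $K$-component (the same identity with $b_L$ replaced by $b_Lb_R$ handles the $S$-component, which you do not address explicitly). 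Second, $\Lambda_R(\Psi_\eta(K,S))$ is not a $G$-conjugate of $b_R$: it would then generally fail to lie in $B$. The correct statement is $\Lambda_R(\Psi_\eta(K,S))=\Dr_{\Xi_R(\eta b_L)^{-1}}(b_R)$, consistent with the infinitesimal formula $\cL_{X_\cM}(b_R)=\dress_{(b_L^{-1}Xb_L)_\cG}(b_R)$ recorded later in the paper; combined with $\Lambda_L(\Psi_\eta(K,S))=\Lambda_L(\eta b_L)$ and the identity $\Lambda_L(bgS)=b\,\Dr_g(S)$ for $b\in B$, $g\in G$, this yields the telescoping $\Lambda_L(\eta b_Lb_R)\,\Dr_{\Xi_R(\eta b_Lb_R)^{-1}}(S)=\Lambda_L(\eta b_Lb_RS)=\Dr_\eta(\Lambda(K,S))$ that you correctly anticipate. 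Both corrections use only the Iwasawa bookkeeping you already invoke, so the argument is complete once the order and the dressing (rather than conjugation) transformation law are fixed.
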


\begin{proof}
One can verify that this formula defines a group action, and the induced infinitesimal action
reproduces the derivations given by the moment map according to (\ref{N6}).
\end{proof}

\begin{remark}
\label{rem:4.2}
One can check that $\Lambda_L$ and $\Lambda_L \Lambda_R$ are also equivariant in the sense that
\be
\Lambda_L \circ \Psi_\eta = \Dr_\eta \circ \Lambda_L,\qquad (\Lambda_L \Lambda_R) \circ \Psi_\eta =
\Dr_\eta \circ (\Lambda_L \Lambda_R).
\label{N9}\ee
It follows that all elements of $\Lambda_L^*(C^\infty(B)^G) = \Lambda_R^*(C^\infty(B)^G):= C_I(\cM)$
 are invariant with respect to $\Psi_\eta$.
Without the extension of the Heisenberg double by the
dressing orbit, the action (\ref{N7}) was introduced in \cite{Kli},
where it was called `quasi-adjoint action'.
\end{remark}

Now, we are interested in the reduced phase space
\be
\cM_\red := \Lambda^{-1}(e)/G.
\label{N10}\ee
For certain orbits $\cO_B$  this is a smooth symplectic manifold. In general,
it is a union of smooth symplectic manifolds of various dimension, a so-called
stratified symplectic space \cite{SL,Sn}. Its structure turns out to be quite similar to what
occurs in the cotangent bundle case. In particular, a reduction of the orbit $\cO_B$
itself will come to fore shortly in our description.

The maximal torus $\T <G$ is
 a Poisson-Lie subgroup of $G$, on which the Poisson structure vanishes.
Hence the dressing Poisson action of $G$ on $\cO_B$ restricts to an ordinary
Hamiltonian action of $\T$. This action operates simply by conjugation.
Writing $S\in \cO_B$ in the form
\be
S= S_0 S_+ \quad\hbox{with}\quad S_0\in B_0,\,S_+\in B_+,
\label{N12}\ee
the map
\be
S \mapsto \log (S_0) \in \cB_0
\label{N14}\ee
is the moment map for the action of $\T$ on $\cO_B$, as follows, for example, from (\ref{B0}).
Here, $\cB_0$ plays the role of the dual space of $\cT$, via the bilinear form (\ref{H4}).
By setting this moment map to zero, i.e. setting $S_0$ equal to the unit element,
we obtain the reduced dressing orbit
\be
\cO_B^\red = \{ S_+ \in \cO_B\}/ \T,
\label{N15}\ee
which itself is a stratified symplectic space.

Let $G^\reg\subset G$ be the set of regular elements.
The space of the conjugacy classes in $G^\reg$ is a smooth manifold, which
can be identified with an open Weyl alcove $\T^o$, i.e., a connected component
of $\T^\reg$.
In this paper we focus on the reduction of the dense open submanifold
of $\cM$ given by
\be
\cM^\reg = \Xi_R^{-1}(G^\reg),
\label{N16}\ee
that is,  we shall assume that in $K= b_L g_R^{-1}$ we have $g_R \in G^\reg$.
We denote
\be
\cM_\red^\reg = \{ (K,S) \mid \Lambda(K,S)=e, \,\, g_R \in G^\reg\}/G.
\label{N17}\ee

Now we state one of the  main results of the paper.

\begin{theorem}
\label{theor:4.3}
The open dense subset $\cM_\red^\reg$ of the reduced phase space can be identified
with
\be
T^*  \T^o \times \cO_B^\red,
\label{N18}\ee
where $\T^o$ is an open Weyl alcove in $\T$ and $\cO_B^\red$
is the reduced dressing orbit (\ref{N15}).  The reduced symplectic structure reads
\be
\Omega_\red = \Omega_{T^* \T^o} + \Omega_{\cO_B}^\red,
\label{N19}\ee
where the first term is the canonical symplectic form of the cotangent bundle $T^* \T^o$, and
the second term refers to the reduced orbit (\ref{N15}).
\end{theorem}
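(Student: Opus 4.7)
The plan is to mirror the reduction procedure of Section~2, with the Iwasawa decomposition (3.4) replacing the Lie-algebra splitting. I would parametrize $\cM^\reg$ by $(b_L, g_R, S)$ through $K = b_L g_R^{-1}$, with $b_R$ recovered from the second decomposition of $K^{-1}$ as $b_R = \Dr_{g_R}(b_L^{-1})$. A direct calculation using $\eta b_L = \Dr_\eta(b_L)\, \Xi_R(\eta b_L)^{-1}$ shows that the action $\Psi_\eta$ of Lemma~4.1 acts on $g_R$ by conjugation with $h := \Xi_R(\eta b_L) \in G$. Since $g_R \in G^\reg$ on $\cM^\reg$, I may impose the partial gauge $g_R = Q \in \T^o$; the residual stabilizer is exactly the centralizer $\T$ of $Q$.

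Next I would solve the constraint $\Lambda(K,S) = e$, which in the gauge reads $b_L \Dr_Q(b_L^{-1}) = S^{-1}$. Writing $b_L = b_0 b_+$ with $b_0 \in B_0$, $b_+ \in B_+$, and using that $\T$ commutes with $B_0$ inside the complex Cartan (so $\Dr_Q$ is the identity on $B_0$ and conjugation on $B_+$), one computes $b_L \Dr_Q(b_L^{-1}) = b_0\,[\,b_+ Q b_+^{-1} Q^{-1}\,]\, b_0^{-1}$, which lies in $B_+$. Comparing with $S^{-1} = S_0^{-1} S_+^{-1}$ forces $S_0 = e$, exactly the vanishing of the $\T$-moment map on $\cO_B$ from (4.12); the remaining $B_+$-equation $b_+ Q b_+^{-1} Q^{-1} = b_0^{-1} S_+^{-1} b_0$ determines $b_+$ uniquely as a smooth function of $(Q, b_0, S_+)$, since $\id - \Ad_Q$ is invertible on each root space when $Q \in \T^o$ is regular, and the equation can be solved recursively in the nilpotent variables in direct analogy with (2.15). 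Thus the constrained slice is $\T^o \times B_0 \times \{S \in \cO_B : S_0 = e\}$; quotienting by the residual $\T$ (which fixes $Q, b_0$ and acts by conjugation on $S_+$) and identifying $B_0 \simeq \cB_0 \simeq \cT^*$ via the pairing (2.4) delivers the claimed product $T^*\T^o \times \cO_B^\red$.

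For the reduced symplectic form (4.17), I would pull back $\Omega = \Omega_+ + \Omega_{\cO_B}$ to the slice and use that in the gauge one has $g_L = Q^{-1}$ (from $b_L g_R^{-1} = g_L b_R^{-1}$), $dg_R g_R^{-1} = -dg_L g_L^{-1} = Q^{-1}dQ \in \cT$, and the $\cB_0$-component of $db_R b_R^{-1}$ reduces to $-db_0 b_0^{-1}$. Since $\langle \cT, \cB_+ \rangle = 0$, the two summands of $\Omega_+$ in (3.6) feel only the $\cB_0$-pieces of $db_L b_L^{-1}$ and $db_R b_R^{-1}$ in their pairings with $Q^{-1}dQ$, and a short calculation returns the canonical cotangent-bundle term $d\langle \beta_0, Q^{-1}dQ\rangle$ with $\beta_0 := \log b_0$, which is $\Omega_{T^*\T^o}$. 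The main obstacle lies in showing that all remaining contributions, coming from the $\cB_+$-pieces of the Maurer--Cartan forms and from the implicit dependence $b_+ = b_+(Q, b_0, S_+)$, combine with $\Omega_{\cO_B}$ into $\Omega_{\cO_B}^\red$ with no residual cross-coupling to the $(Q, \beta_0)$ block. This decoupling reflects the fact that $\T \subset G$ is a Poisson-Lie subgroup with vanishing Poisson tensor, so the $\T$-reduction of $\cO_B$ at $S_0 = e$ is an ordinary Marsden--Weinstein reduction decoupled from the torus cotangent bundle; a clean way to execute this step is to verify the decoupling by matching the reduced Poisson brackets of a generating set of coordinate functions against those of $T^*\T^o \times \cO_B^\red$.
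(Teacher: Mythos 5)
Your proposal is correct and follows essentially the same route as the paper's proof: gauge-fix $g_R$ into $\T^o$ using the conjugation form of the action on $g_R$, observe that the moment map constraint lands in $B_+$ (forcing $S_0=e$) and determines the nilpotent factor uniquely because $\id-\Ad_Q$ is invertible on the root spaces for regular $Q$, then quotient by the residual $\T$. The ``main obstacle'' you flag at the end is not actually one: on the gauge slice both $dg_Lg_L^{-1}$ and $dg_Rg_R^{-1}$ take values in $\cT$, and since $(\cT,\cB_+)=0$ the pullback of $\Omega_+$ consists of nothing but the canonical term $\langle dp\stackrel{\wedge}{,}dq\rangle$ --- the $\cB_+$-pieces of the Maurer--Cartan forms and the implicit dependence $b_+=b_+(Q,b_0,S_+)$ contribute nothing at all --- while $\iota_\cO^*(\Omega_{\cO_B})$ descends to $\Omega_{\cO_B}^\red$ by ordinary Marsden--Weinstein reduction of the orbit at the zero value of the $\T$-moment map $S\mapsto\log S_0$, exactly as in the paper.
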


\begin{proof}  We wish to parametrize the $G$-orbits in the regular part of the constraint surface:
\be
\Lambda^{-1}(e) \cap \Xi_R^{-1}(G^\reg).
\label{N20}\ee
On account of (\ref{N7}), the action of $\eta \in G$  works on $K$ according to
\be
K= b_L g_R^{-1} \mapsto \Lambda_L(\eta b_L) \bigl(\Xi_R(\eta b_L)^{-1} g_R \Xi_R(\eta b_L)\bigr)^{-1}.
\label{N21}\ee
Since for any $b_L\in B$, the map $\eta \mapsto \Xi_R(\eta b_L)$ is a diffeomorphism on $G$,
we can transform $g_R$ into the maximal torus.  More precisely, since we assumed regularity,
we see that every gauge orbit has representatives in the set
\be
\cQ:= \{ (K, S) \mid \Lambda(K,S)=e,\,\, \Xi_R(K)  \in \T^o \}.
\label{N22}\ee
In other words, the manifold $\cQ$ is the gauge slice of a partial gauge fixing.
Now we employ the decomposition
\be
b_R = b_0 b_+ \quad \hbox{with}\quad b_0\in B_0,\, b_+ \in B_+,
\label{N*}\ee
and introduce the notation $Q:= \Xi_R(K)$.
Then the equality $K = b_L g_R^{-1} =  g_L b_R^{-1}$ tells us that
\be
g_L = g_R^{-1} = Q^{-1} \quad \hbox{and}\quad b_L = Q^{-1} b_R^{-1} Q.
\label{N**}\ee
Because of the last relation, we can write
\be
b_L b_R  = Q^{-1} b_R^{-1} Q b_R = Q^{-1} b_+^{-1} b_0^{-1} Q b_0 b_+ = Q^{-1} b_+^{-1} Q  b_+.
\label{N***}\ee
Thus, the restriction of the moment map to $\cQ$ can be expressed as
  \be
 \Lambda(K,S)= b_L b_R S= Q^{-1} b_+^{-1} Q b_+ S.
\label{N24}\ee
This has the following crucial consequences.
First, the $B_0$-factor $b_0$ of $b_R$ is not constrained.
Second, we must have $S\in B_+$, i.e.,
$S=S_+ \in \cO_B \cap B_+$.  Third, the moment map constraint
\be
Q^{-1} b_+^{-1} Q b_+ S_+ = e
\label{N25}\ee
determines $b_+$ as a  function of $Q$ and $S_+$.
To summarize, we obtain a diffeomorphism
\be
\cQ \simeq (\T^o \times B_0) \times (\cO_B\cap B_+) = \{ (Q, b_0, S_+)\}
\label{N26}\ee
by the parametrization
\be
K = Q^{-1} b_+^{-1} b_0^{-1}, \quad S = S_+ \quad \hbox{with}\quad b_+ = b_+(Q, S_+)
\label{N27}\ee
determined by the constraint equation (\ref{N25}).
We stress that, for any given
$Q \in \T^o$ and $S_+ \in \cO_B \cap B_+$,
 equation (\ref{N25}) admits a unique solution for $b_+$. (See also Section 5.)

Two elements of $\cQ$ are gauge equivalent if they are carried into each other
by the action of some $\eta \in G$. It follows from the transformation rule of $g_R$,
\be
g_R \mapsto \Xi_R(\eta b_L)^{-1} g_R \Xi_R(\eta b_L),
\label{N28}\ee
that the `residual gauge transformations' that map elements of $\cQ$ to $\cQ$
are given by the action of the subgroup $\T < G$.
The factors $Q$ and $b_0$ are invariant under this action, while $S_+$
and the corresponding $b_+$  transform according to
\be
S_+ \mapsto T S_+ T^{-1},\qquad b_+ \mapsto T b_+ T^{-1}, \quad \forall T\in \T.
\label{N29}\ee
Therefore,  recalling that $Q$ and $b_0$ can be arbitrary, we obtain the
identification
\be
\cM_\red^\reg \equiv  \cQ/\T \equiv (\T^o \times B_0) \times \cO_B^\red.
\label{N30}\ee
By general principles,
the reduced (stratified) symplectic structure on $\cM_\red^\reg$ arises
from the pull-back of the symplectic form of $\cM$ to the submanifold $\cQ$
of $\Lambda^{-1}(e)$.
Let $\iota_\cQ: \cQ \to \cM$ and $\iota_\cO: (\cO_B \cap B_+) \to \cO_B$ denote
the tautological injections, and introduce the parametrizations
\be
Q = \exp(\ri q), \quad b_0 = \exp(p),
\label{N31}\ee
where $p$ varies freely in $\cB_0$.  By using these, we find from (\ref{N2}) and (\ref{H9}) that
\be
\iota_\cQ^* (\Omega) = \langle dp \stackrel{\wedge}{,} dq \rangle + \iota_\cO^*(\Omega_{\cO_B}).
\label{N33}\ee
The second term descends to the (stratified) symplectic structure of the reduced dressing orbit (\ref{N15}).
Together with the identification (\ref{N30}), this completes the proof.
\end{proof}

\begin{remark}
\label{rem:4.4}
The ring of smooth functions on $\cO_B^\red$ (\ref{N15}) can be identified with the invariants
$C^\infty(\cO_B \cap B_+)^\T$.
Such invariants can be constructed as follows.
Let us write $S_+ \in \cO_B\cap B_+$ in
the form
\be
S_+ = \exp\bigl(\sum_{\alpha \in \Phi_+} \sigma_\alpha e_\alpha  \bigr),
\label{N34}\ee
where the $\sigma_\alpha$ are complex coordinate functions.
Consider arbitrary positive roots $\varphi_1,\dots, \varphi_{n_1}$ and $\psi_1,\dots, \psi_{n_2}$
for which
\be
\sum_{i=1}^{n_1} \varphi_i = \sum_{j=1}^{n_2} \psi_j.
\ee
Then the following polynomial function is $\T$-invariant:
\be
\prod_{i=1}^{n_1} \sigma_{\varphi_i} \prod_{j=1}^{n_2} \sigma_{\psi_j}^*.
\label{polinv}\ee
Here any repetition of the roots is allowed.
The real and imaginary parts of these complex polynomials
can be regarded as $\T$-invariant functions on the whole of $\cO_B$, by declaring that
they do not depend on $S_0$ for $S= S_0 S_+\in \cO_B$.
If we evaluate their Poisson brackets according to (\ref{B+}) and restrict
the result to $\cO_B \cap B_+$, then we obtain invariant polynomials
in the same variables $\sigma_\alpha$.
In principle, this algorithm leads to the Poisson algebra of smooth functions carried by the
reduced dressing orbit.  The reduced Poisson bracket closes on the polynomials
given by linear combinations of the invariants of the form (\ref{polinv}).
\end{remark}

\begin{remark}
\label{rem:4.5}
For completeness, it may be worth explaining
that the reduced coadjoint orbits $\cO_\red=\cO_0/\T$
  and dressing orbits $\cO_B^\red$ (\ref{N15}) are always non-empty.
For a coadjoint orbit  $\cO \subset \cG^* = \cG$,
 let us first note that $\cO \cap \cT$
is an orbit of the Weyl group of the pair $(G,\T)$. Referring to the famous convexity theorems
of Kostant, Atiyah and Guillemin and Sternberg,  one  knows that the image of the moment map for the $\T$-action
on $\cO$ is the convex hull of this Weyl orbit.
Now, let $x_i\in \cT$, $i=1,\dots, N$, denote the elements of the Weyl orbit, and form
the convex combination $x:= \frac{1}{N} \sum_{i=1}^N x_i$.
It is clear that $x$ is a fixed point for the action of the Weyl group.
But the origin is the unique fixed point, since a fixed point is characterized by the property
 that it is perpendicular to
all the roots that define the Weyl reflections.  Thus $x=0$ is in the image
of the moment map, i.e., $\cO_0$ is non-empty.

Essentially the same argument
can be applied in the case of the dressing orbits, too.
\end{remark}

\begin{remark}
\label{rem:4.6}
Recall that $\cG^*$ equipped with the linear Lie-Poisson bracket and $B$ equipped
with the multiplicative Poisson bracket (\ref{H6}) are Poisson diffeomorphic \cite{GW}.
The existence of a $\T$\emph{-equivariant} Poisson diffeomorphism
implies that every reduced dressing orbit (\ref{N15})
is symplectomorphic to a reduced coadjoint orbit $\cO_0/\T$.
Such a Ginzburg-Weinstein diffeomorphism has been exhibited in \cite{AMe} for $G=\SU(n)$.
If a $\T$-equivariant Ginzburg-Weinstein diffeomorphism exists in general,  which is  believed to be the case,
then the phase space $M_\red^\reg$ in (\ref{27}) is always symplectomorphic to the corresponding  phase space
$\cM_\red^\reg$ in (\ref{N18}).
\end{remark}

\section{Connection with the spin Sutherland model}

We need to recall some group theoretic facts.
Let $\rho: G^\C  \to  \GL(V)$ be a finite dimensional irreducible representation.
Then the complex vector space $V$ can be equipped with a Hermitian inner product in such a way
that $\rho(K^\dagger) = \rho(K)^\dagger$  holds $\forall K\in G^\C$, that is,
 the compact subgroup $G$ and $\fP$ (\ref{H15})
are represented by unitary and by positive %Hermitian
operators, respectively.
The  character $\chi_\rho(K)= \tr(\rho(K))$ restricts to a $G$-invariant function on $\fP$,
and $\C^\infty(\fP)^G$ is functionally generated by the characters of the $r$ fundamental highest weight
representations.

We shall inspect the so-called {\emph{main reduced  Hamiltonians}}, which descend from the characters.
More precisely, we reduce
the $G$-invariant functions $H^\rho\in C^\infty(\cM)^G$ of the form
\be
H^\rho (K,S):= \tr_\rho(b_R b_R^\dagger):= c_\rho \tr (\rho(b_R b_R^\dagger)).
\label{F1}\ee
Here,  $K=g_L b_R^{-1}$ as in (\ref{H8}) and $c_\rho$ is a normalization constants, chosen so that
\be
c_\rho \tr\left( \rho(E_\alpha)\rho(E_{-\alpha})\right) = 2/\vert\alpha\vert^2.
\label{norm}\ee
The associated representation of $\cG^\C$ is also denoted by $\rho$,
and below we shall write simply $\tr_\rho(XY)$ instead of $c_\rho\tr( \rho(X) \rho(Y))$.

We shall demonstrate that, upon evaluation in the diagonal gauge $\cZ$ (\ref{N26}),  $H_\red^\rho$ can
be expanded in such a manner that its leading term
has the same form as  the spin Sutherland Hamiltonian (\ref{1}).
Then we shall point out the relationship between the Lax matrix
engendered by  $b_R b_R^\dagger$
and the Lax matrix of the spin Sutherland model.
In Remark 5.1, we elucidate
the interpretation of these statements in terms of a one-parameter deformation.
In Section 5.2,
we derive explicit formulas for $G^\C = \SL(n,\C)$, using its defining
representation.

\subsection{Reduced main Hamiltonians and Lax matrices}

Let us inspect the constraint equation (\ref{N25})
by parametrizing the variables as
\be
S_+ = e^\sigma, \quad b_+ = e^\beta, \quad
\sigma = \sum_{\alpha>0} \sigma_\alpha E_\alpha,\quad \beta = \sum_{\alpha>0} \beta_\alpha E_\alpha
\label{F2}\ee
using complex expansion coefficients $\sigma_\alpha$, $\beta_\alpha$, and  $Q =e^{\ri q}$.
The Baker-Campbell-Hausdorff formula permits us to rewrite the constraint equation as
\be
\exp( \beta - Q^{-1} \beta Q - \frac{1}{2} [ Q^{-1} \beta Q, \beta] + \cdots ) = \exp(-\sigma),
\label{F3}\ee
where the dots indicate higher commutators. Note that the BCH series is now finite, since $\cB_+$ is
nilpotent. Using that $B_+$ is diffeomorphic to its Lie algebra by the exponential map,
we see from
\be
 \beta - Q^{-1} \beta Q - \frac{1}{2} [ Q^{-1} \beta Q, \beta] + \cdots = - \sigma
\label{F4}\ee
that $\beta_\alpha$ can be expressed in terms of $\sigma$ and $e^{\ri q}$ in the
following form:
\be
\beta_\alpha = \frac{\sigma_\alpha}{e^{-\ri \alpha(q)} -1 } + \Gamma_\alpha(e^{\ri q}, \sigma),
\label{F5}\ee
where $\Gamma_\alpha$ contains higher order terms in the components of $\sigma$. Namely,
we have
\be
\Gamma_\alpha = \sum_{k\geq 2} \sum_{\varphi_1,\dots, \varphi_k} f_{\varphi_1,\dots, \varphi_k}(e^{\ri q})
\sigma_{\varphi_1}\dots \sigma_{\varphi_k},
\label{F6}\ee
where the $\varphi_1,\dots, \varphi_k$ are unordered collections of
positive roots, such that $\alpha = \varphi_1 + \dots + \varphi_k$.
The $f_{\varphi_1,\dots, \varphi_k}$ are rational functions in $e^{\ri q}$, and the sum contains
only finitely many terms.
These claims are established by inspection of equation (\ref{F4}), proceeding
iteratively according to the height of the root $\alpha$ labeling $\sigma_\alpha$ on the right-hand side.
It follows that
the restriction of $H^\rho$ (\ref{F1})  to $\cQ \subset \Lambda^{-1}(e)$ (\ref{N26}) can be expanded as a
polynomial in the components of the `spin variable' $\sigma$, with coefficients given by
rational functions in the components of $e^{\ri q}$ multiplied by
suitable components of $b_0^2 \equiv \exp(2 p)$.
Specifically, we see that
\be
H^\rho_\red(e^{\ri q},p,\sigma) = \tr_\rho( e^{2 p} b_+ b_+^\dagger)
\label{F7}\ee
 can be expanded as
\be
H_\red^\rho(e^{\ri q},p,\sigma)= \tr_\rho\left(e^{2p}\left(\1_\rho + \frac{1}{4}\sum_{\alpha>0}
\frac{ \vert \sigma_\alpha \vert^2 E_{\alpha} E_{-\alpha}}{\sin^2(\alpha(q)/2)} +
\o_2(\sigma,\sigma^*) \right)\right).
\label{F8}\ee
To obtain this from (\ref{F7}), we took into account that $E_\alpha^\dagger = E_{-\alpha}$ and that,
in consequence of the invariance of $\tr_\rho$ with respect to conjugation by the elements  of $\T$,
 $\tr_\rho(e^{2p} E_\alpha E_{-\gamma}) =0$ unless $\gamma= \alpha$.
Applying (\ref{norm})  and expanding $e^{2p}$,  equation (\ref{F8}) implies
\be
H^\rho_\red(e^{\ri q},p,\sigma) = \dim_\rho + 2 \tr_\rho (p^2) +
\frac{1}{2} \sum_{\alpha>0} \frac{1}{\vert \alpha\vert^2} \frac{ \vert \sigma_\alpha \vert^2 }{\sin^2(\alpha(q)/2)}  +
 \o_2(\sigma,\sigma^*, p).
\label{F9}\ee
The symbols $\o_2$ indicate terms that are at least cubic in the components of
the displayed arguments, and depend rationally on $e^{\ri q}$.
We used that $\tr_\rho(p) = 0$, because $\cG^\C$ is a simple Lie algebra.
According to (\ref{F9}), the
 leading term of $\frac{1}{4} (H_\red^\rho - \dim_\rho)$
matches precisely the spin Sutherland Hamiltonian (\ref{tilde1}).

We have seen that, when expanded in the components $p$ and the spin variable $\sigma$,
the leading term of the reduced main Hamiltonian (\ref{F9}) has the same form as the spin Sutherland
 Hamiltonian (\ref{tilde1}).
It should be noted that the Poisson brackets of the corresponding spin variables, $\tilde{\xi}_\alpha$ and $\sigma_\alpha$,
are different in the two cases.
The relationship between the spin algebra that arises in our Poisson-Lie case
and the one that occurs for the spin Sutherland model is given by equation (\ref{B+}).
It follows that we recover the spin Sutherland model (\ref{1}) from our reduced system
if we keep only the leading term of the reduced  Hamiltonian $H^\rho_\red$ and only the leading terms
of the Poisson brackets of the spin variables.

All reduced Hamiltonians arising from $C_I(\cM)= \Lambda_R^*(C^\infty(B)^G)$ can be  expressed as $G$-invariant
functions of the `Lax matrix' $L(e^{\ri q}, p, \sigma):= b_R b_R^\dagger$,
regarded as a function on $\cZ$ (\ref{N26}).
Working in some representation and keeping only linear terms in $p$ and $\sigma$,
 this Lax matrix can be expanded as follows:
\be
L(e^{\ri q}, p, \sigma )= \1 + 2p + \sum_{\alpha>0} \left(\frac{\sigma_\alpha}{e^{-\ri \alpha(q)} -1} E_{\alpha} +
\frac{\sigma_\alpha^*}{e^{\ri \alpha(q)} -1} E_{-\alpha}\right)
+ \o(\sigma, \sigma^*, p).
\label{LqpS}\ee
If we replace $\sigma_\alpha$ by $\tilde \xi_\alpha$ and substitute
the change of variables (\ref{tildevar}), then (up to irrelevant constants) the linear part of $L$ in $\sigma, \sigma^*$ and $p$
reproduces $L_{\mathrm{Suth}}:=  \ri J$ as given by equations (\ref{214}) and (\ref{215}), which can be regarded as a Lax matrix
for the spin Sutherland model:
\be
L_{\mathrm{Suth}}(e^{\ri q},p, \xi) =  p
+ \sum_{\alpha>0} \left(\frac{\ri \xi_\alpha}{e^{-\ri \alpha(q)} -1} E_{\alpha} +
\frac{(\ri \xi_\alpha)^*}{e^{\ri \alpha(q)} -1} E_{-\alpha}\right).
\ee
It is clear from this that the $G$-invariant polynomials of $L_{\mathrm{Suth}}$
can be recovered as leading terms of suitable invariant polynomials of $(L - \1)$.

In fact, taking $G=\SU(n)$ and a minimal dressing orbit our construction
reproduces
the standard trigonometric RS system \cite{FK1}. In this case,
there are no spin variables since the reduced dressing orbit consists of a single point.
The characters of the fundamental representations give rise  to $(n-1)$ independent Hamiltonians,
which become the standard RS Hamiltonians \cite{RBanff,RS} after a certain canonical transformation.
In the general case, the Hamiltonians (\ref{F8}) resemble the
light-cone Hamiltonians of the RS system. The presence of the leading exponent $e^{2p}$ in (\ref{F8})
entails that the reduced main Hamiltonians originating from
the $r$ fundamental characters {\emph{are  functionally independent in general}}.

\begin{remark}
\label{rem:5.1}
The foregoing results justify calling our models generalizations
of the spin Sutherland models, but we also would like to view them as \emph{deformations}.
However, so far we have not introduced any deformation parameter.
In fact, we can not introduce a deformation parameter if we take the spin
variables from a fixed reduced dressing orbit.
Now we explain that  this obstacle can be overcome by
placing all our reduced phase spaces (\ref{N18}) inside  the single Poisson space
$T^* \T^o \times (B_+/\T)$,
where $B_+/\T$ is the Poisson reduction of $(B, \{\ ,\ \}_B)$  at the zero value
of the moment map for the Hamiltonian $\T$-action. We can then consider a real, positive scale parameter
$\epsilon$ and replace the variables
$(e^{\ri q}, p, e^{\sigma}) \in T^* \T^o \times B_+$ by $(e^{\ri q}, \epsilon p, e^{\epsilon \sigma})$.
At the same time, we use the scaled Poisson structure $\{\ ,\ \}^\epsilon = \epsilon \{\ ,\ \}$,
so that $p$ and $q$ remain canonical conjugates with respect to $\{\ ,\ \}^\epsilon$.
In this way we obtain the relation
\be
\lim_{\epsilon \to 0} \frac{1}{4\epsilon^2}\left( H_\red^\rho(e^{\ri q}, \epsilon p, \epsilon \sigma) - \dim_\rho\right)=
\frac{1}{2} \tr_\rho (p^2) +
\frac{1}{8} \sum_{\alpha>0} \frac{1}{\vert \alpha\vert^2} \frac{ \vert \sigma_\alpha \vert^2 }{\sin^2(\alpha(q)/2)}.
 \ee
Moreover, it follows from (\ref{B+}) that in the $\epsilon=0$ limit the $\T$-invariant polynomials of
 $\sigma\in \cB_+$ satisfy the Poisson brackets corresponding to the reduction of $\cG^* \simeq \cB$
at the zero value of the moment map for the respective $\T$-action.
The expansion  (\ref{LqpS}) of $L$ gives
 \be
 L(e^{\ri q}, \epsilon p, \epsilon \sigma) = \1 + 2\epsilon  L_{\mathrm{Suth}}(e^{\ri q}, p, \xi)+ \o(\epsilon),
 \quad
 \hbox{with}\quad
 2\ri\xi_\alpha = \sigma_\alpha,
 \ee
 and thus we can recover all invariant polynomials of $L_{\mathrm{Suth}}$ via suitable
 scaling limits of invariants of  $L(e^{\ri q}, \epsilon p, \epsilon \sigma)$.
For example, if we consider $G^\C=\SL(n,\C)$ then we obtain $(n-1)$ independent
Hamiltonians for the spin Sutherland model from
\be
\tr(L_{\mathrm{Suth}}(e^{\ri q},p, \xi)^k) = \lim_{\epsilon \to 0} \frac{1}{(2\epsilon)^k}
\tr \left( (L(e^{\ri q}, \epsilon p, \epsilon \sigma) - \1_n)^k\right),\qquad
k=2,\dots, n.
\ee
The limiting procedure just outlined is similar to the way in which the
spinless Sutherland model is obtained as the non-relativistic limit
 of the standard RS model \cite{RBanff,RS}.
In that context,  the parameter $\epsilon^{-1}$ plays the role of the
`velocity of light'.
\end{remark}

\begin{remark}
\label{rem:5.2}
One-parameter families of dressing orbits of a given type are represented by the $G$-orbits in $\fP$ that
are of the form  $\exp(2\ri s \cO_\cG)$ for a fixed coadjoint orbit  in $\cG^*\equiv \cG$ and any
non-zero real parameter $s$.  It is tempting to speculate that $s$ can also give rise to a
deformation parameter, which should connect the spin RS  models associated with these dressing orbits
to the spin Sutherland model based on $\cO_\cG$.
This appears an intricate issue because of
the complicated mapping from the exponential
parametrization $S S^\dagger = \exp(2 \ri s X)$ with  $X\in \cO_\cG$ (cf. equation (\ref{H19}))
to the parametrization $S=  S_0 e^\sigma$ that we found convenient to use.
We plan to return to this problem elsewhere.
\end{remark}

\subsection{Explicit formulas for $G^\C= \SL(n,\C)$.}

In this case, it is convenient to parametrize $b\in B$ by using matrix elements.
After diagonalizing  $g_R$, we are going to solve the moment map constraint for
\be
b_R = e^{p} b,
\label{bRpar}\ee
where $b$ is an unknown upper triangular matrix with unit diagonal.
The constraint equation is given by
\be
Q^{-1} b Q = b S,
\ee
where $Q= \diag(Q_1,\dots, Q_n)\in \T^o$ and $S\in \cO_B\cap B_+$ is the constrained `spin' variable.
Except for the notation, this is the same as equation (\ref{N25}).
Applying the principal gradation of $n\times n$ matrices, which is additive under
matrix multiplication,  we expand $b$ and $S$ according to
\be
b= \1_n + b(1) + \cdots + b(n-1), \quad
S= \1_n + S(1) + \cdots + S(n-1).
\ee
For example, $S(j)$ is a linear combination of the $n\times n$ elementary matrices of the form
$E_{i, i+j}$.
We can spell out the constraint equation as
\be
Q^{-1} b(k) Q - b(k) = S(k) + \sum_{i+j = k} b(i) S(j),
\qquad k=1,\dots, n-1,
\ee
where $1\leq i,j \leq n-1$.
This can be solved iteratively, proceeding upwards from $k=1$ until $k=n-1$.
To write down the solution, we introduce the shorthand notation
\be
\cI_{a, a+ j} = \frac{1}{Q_{a+j} Q_a^{-1} - 1},
\ee
which is well defined since $Q$ is regular.
It is not difficult to ascertain that the solution for the matrix $b$ is as follows.
For the grade $1$ matrix entries, we have
\be
b_{a,a+1} = \cI_{a, a+1} S_{a,a+1}.
\ee
For higher grades, the matrix elements $b_{a, a+k}$,  $k=2,\dots, n-a$, read as follows:
\be
b_{a, a+k} = \cI_{a, a+k} S_{a, a+k} + \sum_{\substack{m=2,\dots, k\\ (i_1,\dots, i_m)\in \N^m\\ i_1 + \dots + i_m = k} }
\prod_{\alpha=1}^m
\cI_{a, a + i_1 +\dots + i_\alpha} S_{a + i_1 + \dots + i_{\alpha -1}, a+ i_1 + \dots + i_\alpha}.
\label{explicit}\ee
It is understood that $i_0=0$ and the $S$-factor for $\alpha =1$ is $S_{a, a+ i_1}$.
We omit the inductive verification of this formula, which can be done using
that the form of $b_{a,a+k}$,  for $a+k \leq n$, does not depend on $n$.

By substituting (\ref{bRpar}) and (\ref{explicit}) into $\tr (b_R b_R^\dagger)$ we obtain the expansion
\be
H_\red(e^{\ri q},p,S) = \tr(b_R b_R^\dagger) =\sum_{a=1}^n  e^{2p_a} +
\sum_{a=1}^{n-1} e^{2p_a} \sum_{k=1}^{n-a} \vert \cI_{a,a+k} S_{a, a + k}\vert^2 + \hbox{higher order terms},
\label{HredU}\ee
where the higher order terms are at-least-cubic polynomials in the matrix elements of $S$ and their complex conjugates.
Since $Q_a = e^{\ri q_a}$, we have
\be
\vert  \cI_{a, a+ k}\vert^2  =\frac{1}{4 \sin^2((q_{a+k} - q_a)/2)},
\ee
and therefore
\be
H_\red (e^{\ri q},p,S) =\sum_{a=1}^n  e^{2p_a} +
\frac{1}{4}\sum_{a=1}^{n-1} e^{2p_a} \sum_{k=1}^{n-a} \frac{ \vert S_{a, a + k}\vert^2}{ \sin^2((q_{a+k} - q_a)/2)}
 + \hbox{higher order terms}.
\ee
This generalizes the corresponding spin Sutherland Hamiltonian, since
\be
H_\red (e^{\ri q},p,S) -n  =2 \sum_{a=1}^n  p_a^2 +
\frac{1}{4} \sum_{a=1}^{n-1}  \sum_{k=1}^{n-a} \frac{ \vert S_{a, a + k}\vert^2}{ \sin^2((q_{a+k} - q_a)/2)}
 + \cdots
\ee
with the dots standing for terms whose total degree in $p$, $S$ and $S^*$ is at least $3$.

Since by (\ref{explicit}) we have solved the constraints, we could write a fully explicit formula for
the reduced main Hamiltonian (\ref{HredU}), represented as a $\T$-invariant function on $\cZ$ (\ref{N26}).
We present it as an illustration in the simplest $n=3$ case, for which we have
\be
b_{1,2} = \cI_{1,2} S_{1,2},\quad b_{2,3}=\cI_{2,3} S_{2,3},\quad
b_{1,3}= \cI_{1,3} \left(S_{1,3} + \cI_{1,2}  S_{1,2} S_{2,3}\right)
\ee
and, with $p_1 + p_2 + p_3 =0$,
\bea
&&H_{\red}= (e^{2p_1} + e^{2p_2} + e^{2p_3}) + e^{2p_1} \left(\vert\cI_{1,2} S_{1,2} \vert^2 + \vert \cI_{1,3} S_{1,3} \vert^2 \right)
+ e^{2p_2} \vert\cI_{2,3} S_{2,3} \vert^2   \nonumber\\
&&\qquad \qquad + 2 e^{2p_1} \vert\cI_{1,3}\vert^2 \operatorname{Re}\left(\cI_{1,2}  S_{1,2} S_{2,3} S_{1,3}^* \right).
\eea

In principle,  the structure of the reduced Poisson algebra can be made explicit too,
using that the Poisson structure of $B$ can be described in terms of the matrix elements of
the freely varying $b \in B$.  As one can check, and is well-known, this structure is given by
\be
\{ b_{m,j}, b_{k,l}\}_B =\ri b_{k,j} b_{m,l} \left[ \delta_{m,k} + 2 \theta(m-k) - \delta_{l,j} - 2 \theta(l-j)\right],
\ee
where $\theta$ is the step function (+1 for positive arguments, and 0 for non-positive ones), and
\be
\{ b_{m,j}, b_{k,l}^* \}_B = \ri b_{m,j} b_{k,l}^* [ \delta_{m,k} - \delta_{j,l}]
+ \ri \left[ \delta_{m,k} \sum_{\beta > m} b_{\beta, j} b_{\beta, l}^* - \delta_{j,l} \sum_{\alpha <j} b_{m,\alpha}
b_{k,\alpha}^*\right].
\ee
This fixes the real Poisson brackets of the real and imaginary parts of the matrix elements,
and allows us to find the Poisson algebra of their $\T$-invariant polynomials.

\section{Reduced Hamiltonian flows and integrability}

Here, we  derive the equations of motion of the reduced system
and present an algorithm for obtaining its solutions. We then point out the connection
between our construction and  the work of L.-C.~Li  \cite{Li1,Li2}.
Finally, we briefly discuss the integrability
of the system.

\subsection{Equations of motion and solutions by projection}

For  any $G$-invariant function $H$ on $\cM$, the reduced Hamiltonian flow can be
obtained by first restricting the original flow to the moment map constraint surface, and then projecting it to $\cM_\red$.
The reduced Hamiltonian vector field can be obtained in an analogous manner.
If the reduced phase space is realized via gauge fixing, this means that in general we have to add to the Hamiltonian
vector field of $H$ a point dependent infinitesimal
gauge transformation\footnote{This `compensation' is needed if
the restricted Hamiltonian vector field is not tangent to the gauge slice.}
 that ensures that the additional
constraints are preserved.
We can apply the same procedure to partial gauge fixings as well, like the `diagonal gauge' $\cZ$ (\ref{N22}),
where $g_R$ belongs to $\T^o$.

The determination of the `compensating gauge transformation' relies on the following.
Let $X_\cM$ be the vector field on $\cM$ that gives the infinitesimal action of $X\in \cG$ generated
by the moment map $\Lambda$. Regarding $b_R$, $g_R$, $S$ etc as evaluation functions on $\cM$, their Lie-derivatives
with respect to this vector field can be easily evaluated.
We obtain
\be
\cL_{X_\cM}(b_R) = \dress_{(b_L^{-1} X b_L)_\cG} (b_R),\quad
\cL_{X_\cM}(g_R) = [(b_L^{-1} X b_L)_\cG, g_R].
\label{der1}\ee
The first formula and (\ref{dress})  imply
\be
\cL_{X_\cM}(b_R b_R^\dagger) = [(b_L^{-1} X b_L)_\cG, b_R b_R^\dagger],
\label{der2}\ee
and for completeness we also record
\be
\cL_{X_\cM}(S) = \dress_{(b_R^{-1} b_L^{-1} X b_L b_R)_\cG} (S),\quad
\cL_{X_\cM}(b_R S) = \dress_{(b_L^{-1} X b_L)_\cG}(b_R S).
\label{der3}\ee
Note that $X\mapsto (b_L^{-1} X b_L)_\cG$ is an invertible map on $\cG$. Therefore, any
 point dependent infinitesimal gauge transformation can be represented as a vector field of
 the above form, where $(b_L^{-1} X b_L)_\cG$ is replaced by some $\cG$-valued function,
 called $Y$ below, on the phase space.

Now, consider a function $H\in C^\infty(\cM)^G$  of the form $H(K,S) = h(b_R)$, and
denote
\be
 (d^L h)(b_R) =: \cV(L)\qquad \hbox{with}\quad L:= b_R b_R^\dagger.
\label{cV}\ee
Before reduction, the  Hamiltonian vector field of $H$ can be symbolically
written as
\be
\dot{g}_R = \cV(L) g_R, \quad
\dot{b}_R=0,\quad \dot{S}=0.
\label{unred}\ee
For the description of the reduced symplectic structure, it was appropriate to use
the variables $Q= e^{\ri q}$, $p$ and $S$, but the equations of motion are more
conveniently expressed in terms of $Q$ and $L$.
Since we work in the diagonal gauge, $p$ can be uniquely recovered from
$b_R$, which itself is uniquely determined by $L$, via either of the unique decompositions
\be
b_R = e^{p} b_+ = n_+ e^p \quad \hbox{with}\quad b_+, n_+\in B_+.
\label{bRdec}\ee
Taking advantage of the moment map constraint (\ref{N25}), $S=S_+$ is determined by the formula
\be
S= b_R^{-1} Q^{-1} b_R Q.
\label{Sfun}\ee
Thus,  it is enough to find the time development
of $Q$ and $L$, and then recover $p$ and $S$ by the above algebraic relations.

Corresponding to $\cG = \cT + \cT^\perp$, we decompose the $\cG$-valued functions
$\cV$ and $Y$ as
\be
\cV = \cV_\cT + \cV_\perp
\quad\hbox{and} \quad Y = Y_\cT + Y_\perp.
\ee
Then we can state

\begin{proposition}
\label{prop:6.1}
The projection of the Hamiltonian vector field (\ref{unred}) to the diagonal gauge
$\cZ$ (\ref{N22}) gives
\be
\dot{Q}  = \cV_\cT(L) Q
\qquad\hbox{and}\qquad \dot{L}= [Y_\cT + Y_\perp(Q,L), L],
\label{eq1}\ee
where
\be
Y_\perp(Q,L) = \left(\Ad_Q - \id\right)^{-1}(\cV_\perp(L))
\label{eq2}\ee
and $Y_\cT$ is arbitrary, reflecting the residual gauge transformations.
\end{proposition}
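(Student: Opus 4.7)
The plan is to implement the standard recipe outlined just before the proposition: restrict the unreduced flow (\ref{unred}) to the constraint surface and then add a point-dependent infinitesimal gauge transformation of the form $X_\cM$, chosen so that the corrected flow is tangent to the gauge slice $\cZ$ on which $g_R \in \T^o$. Following the parametrization introduced in the paragraph containing (\ref{der3}), I will label the compensating infinitesimal gauge vector field by the $\cG$-valued function $Y = (b_L^{-1} X b_L)_\cG$, so that the formulas (\ref{der1})--(\ref{der3}) apply directly.

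First I would corrcet the $g_R$-equation. Combining (\ref{unred}) with (\ref{der1}) yields
\be
\dot{g}_R = \cV(L) g_R + [Y, g_R],
\ee
and, setting $g_R = Q \in \T$ and right-multiplying by $Q^{-1}$,
\be
\dot{Q}\, Q^{-1} = \cV(L) + (\id - \Ad_Q)(Y).
\ee
Now I would decompose everything according to $\cG = \cT + \cT^\perp$. Because $Q \in \T$, the map $\Ad_Q$ acts as the identity on $\cT$ and preserves $\cT^\perp$ (it is orthogonal and fixes $\cT$). Hence the requirement $\dot Q Q^{-1} \in \cT$, which is necessary for tangency to the gauge slice, splits into the $\cT$-component $\dot Q Q^{-1} = \cV_\cT(L)$ and the $\cT^\perp$-component $\cV_\perp(L) + (\id - \Ad_Q)(Y_\perp) = 0$. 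The $\cT$-component of $Y$ drops out and is left undetermined, which matches the residual $\T$-action identified in the proof of Theorem \ref{theor:4.3}.

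The crucial point is solving for $Y_\perp$. In the root space decomposition, $\Ad_Q$ acts on each $E_\alpha$ by $e^{\ri \alpha(q)}$, and regularity $Q \in \T^o$ guarantees $e^{\ri \alpha(q)} \neq 1$ for every root $\alpha$. Thus $(\Ad_Q - \id)$ is invertible on $\cT^\perp$, and I obtain $Y_\perp(Q,L) = (\Ad_Q - \id)^{-1}(\cV_\perp(L))$, which is (\ref{eq2}). This is the only nontrivial analytic input in the argument.

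For the $L$-equation, the unreduced flow has $\dot L = 0$, so after correction
\be
\dot L = \cL_{X_\cM}(L) = [Y, L] = [Y_\cT + Y_\perp(Q,L), L],
\ee
where I use (\ref{der2}) and (\ref{cV}). This gives (\ref{eq1}). The main obstacle, once invertibility is in hand, is verifying consistency: I would check that the constraint (\ref{N25}) is preserved under the corrected flow, either directly from (\ref{der3}) applied to $S = b_R^{-1} Q^{-1} b_R Q$, or by invoking general Marsden--Weinstein theory (the corrected vector field is tangent to $\Lambda^{-1}(e)$ because $Y_\cM$ is tangent there for any $Y$). Both of these are routine given that the recipe produces, by construction, a lift of the reduced Hamiltonian vector field from $\cM_\red^\reg$ to $\cZ$.
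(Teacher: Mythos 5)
Your proposal is correct and follows essentially the same route as the paper's proof: restrict (\ref{unred}) to the gauge slice, add a compensating infinitesimal gauge transformation parametrized by a $\cG$-valued function $Y$ via (\ref{der1})--(\ref{der2}), impose $\dot{Q}Q^{-1}\in\cT$, and use regularity of $Q$ to invert $(\Ad_Q-\id)$ on $\cT^\perp$. The closing consistency check against the constraint is a sensible extra remark but, as you note, is guaranteed by the general reduction framework and is not needed for the argument.
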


\begin{proof}
By restriction to $\cZ$, where $g_R=Q$, and addition of a vector field tangent to the gauge orbits,
we obtain from (\ref{unred}) and (\ref{der1}), (\ref{der2}) the `corrected' derivatives
\be
\dot{Q}  = \cV(L) Q + [ Y, Q], \qquad \dot{L} = [Y, L].
\label{dotQL}\ee
The tangency to $\cZ$ means that $\dot{Q} Q^{-1}$ must belong to $\cT$, which
is equivalent to
\be
Q Y_\perp Q^{-1} - Y_\perp  = \cV_\perp(L).
\ee
Since $Q\in \T^o$ is regular, $(\Ad_Q-\id)$ is invertible on $\cT^\perp$, and
therefore $Y_\perp$ is uniquely found to be given by (\ref{eq2}).
The  constituent $Y_\cT$ of $Y$ is left undetermined,
since we have not fixed the residual gauge transformations that map $\cZ$ to $\cZ$.
\end{proof}

In principle, one can construct the solutions of the equations of motion
by projecting the unreduced flow (\ref{H13})  as follows.
Pick any initial values $Q(0), L(0)$ defined by a point of $\cZ$.
Then, there exists $\eta(t) \in G$ and unique $Q(t)$, in general from the
closure of the Weyl alcove $\T^o$,  verifying
\be
Q(t) = \eta(t) \exp(t \cV(L(0))) Q(0) \eta(t)^{-1}.
\ee
If $ \exp(t \cV(L(0))) Q(0)$ belongs to $G^\reg$,
which
certainly holds for small $t$, then one can choose $\eta(t)$ in such a way to depend analytically on $t$ and on the
initial values.
The resulting $Q(t)\in \T^o$ and $L(t) = \eta(t) L(0) \eta(t)^{-1}$ solve
the equations of motion (\ref{eq1}).
Along this solution, one has
\be
Y(t) = \dot{\eta}(t) \eta(t)^{-1}.
\ee
We note that $\cV(L(t)) = \eta(t) \cV(L(0)) \eta(t)^{-1}$ follows, since $h\in C^\infty(B)^G$ (see \ref{cV})).
As $\eta(t)$ can be replaced by
$T(t)\eta(t)$ with an arbitrary function $T(t) \in \T$,
one can eliminate $Y_\cT$ for any given solution.

According to (\ref{bRdec}), the time development of $p$ can be recovered from the decomposition
\be
L(t) = b_R(t) b_R(t)^\dagger = n_+(t) e^{2 p(t)} n_+(t)^\dagger, \quad \hbox{with}\quad n_+(t) \in B_+.
\ee
After finding $Q(t)$, and $b_R(t)$ from $L(t)$,
the time development of $S$ follows from (\ref{Sfun}).

It requires further delicate analysis to see whether
and for which dressing orbits can
 the trajectory $g_R(t) = \exp(t \cV(L(0)) Q(0)$ leave the set of regular elements.
 Then  some aspects of our analysis break down, although $Q$ in the closure of $\T^o$
 still remains a well-defined gauge invariant object.
 Moreover, it follows from general results \cite{SL,Sn} that
 the reduced Hamiltonian flows are all complete, and they
 stay inside the symplectic strata of the full reduced phase space.
  We plan to investigate this issue, first in examples,
 in a future work.

\begin{remark}
\label{rem:6.2}
For regular $Q=e^{\ri q}$, let $R(Q) \in \mathrm{End}{(\cG)}$ be the following linear operator:
\be
R(Q)|_{\mathcal{\cT}}  :=0, \qquad
R(Q)|_{\mathcal{\cT^\perp}}:=
\frac{1}{2} (\Ad_Q + \id) \left((\Ad_Q - \id)|_{\mathcal{\cT}^{\perp}}\right)^{-1} = \frac{1}{2}
\coth\left(\frac{1}{2} (\ad_{\ri q})|_{\mathcal{\cT^\perp}} \right).
\label{RQ}
\ee
Viewing it as a function of $\ri q\in \cT_\reg$,  this is a solution of the modified
classical dynamical Yang--Baxter equation \cite{EV}.   The equations of motion
(\ref{eq1}) admit the equivalent form
\be
\dot{Q}  = \cV_\cT(L) Q
\qquad\hbox{and}\qquad  \dot{L}= [Y_\cT' + R(Q)(\cV(L)), L].
\label{eq3}\ee
Here, $Y'_\cT = Y_\cT + \frac{1}{2} \cV_\cT(L)$, which is as arbitrary as $Y_\cT$ is, and we
 used that, in consequence of the definitions, $[\cV(L), L]=0$.
We recall \cite{FP1} that
the same dynamical $R$-matrix features in the equations of motion of the spin Sutherland models of Section 2.
However,
 for those models the Lax matrix belongs to $\ri \cG$, while our $L$ lies in $\exp(\ri \cG)$.
\end{remark}

As an illustration, let us consider
\be
h:= \tr( b_R b_R^\dagger)
\ee
for $G^\C_\R= \SL(n,\C)$. The form of the corresponding reduced Hamiltonian was presented in the preceding section.
To display the reduced equations of motion, note that
in this case
\be
\cV(L) = 2 \ri L - \frac{2\ri}{n} \tr(L) \1_n.
\ee
Writing $Q = \exp\left(\diag(\ri q_1,\dots ,\ri q_n)\right)$,  (\ref{eq2}) gives
\be
Y_{jk} = \left( \cot \frac{q_j - q_k}{2} - \ri\right) L_{jk}\quad \hbox{for}\quad 1\leq j\neq k\leq n.
\ee
Substitution into (\ref{eq1}) yields the equations of motion:
\be
\dot{q}_j = 2L_{jj} - \frac{2}{n} \tr(L), \qquad
 \dot{L}_{jj} = 2\sum_{\ell \neq j} \vert L_{j\ell} \vert^2 \cot\frac{q_j - q_\ell}{2},
\ee
and, for $1\leq j\neq k\leq n$,
\bea
&&\dot{L}_{jk} =
 \sum_{\ell \neq j} L_{j\ell}  L_{\ell k} \cot\frac{q_j - q_\ell}{2} -
 \sum_{\ell \neq k}L_{j\ell}  L_{\ell k} \cot\frac{q_\ell - q_k}{2}\nonumber \\
&&\qquad\qquad +(Y_{jj} - Y_{kk}  + \ri L_{jj} - \ri L_{kk}) L_{jk}.
\eea
Since the diagonal entries of $Y$ are arbitrary, one can choose them in such a way
to get rid of the second line in $\dot{L}_{jk}$.
After analytic continuation from trigonometric to hyperbolic functions,
these equations coincide
 with Eq.~(15) in \cite{BH} as well as with Eq. (5.15) in \cite{Li1}.
They are also special cases of Eq. (3.36) in \cite{AF} if the `potential' that appears there
is taken to be simply $\cot(q)$ instead of $\cot(q) - \cot (q+\gamma)$, i.e., if we omit the
term containing the coupling constant $\gamma$.

\begin{remark}
\label{rem:6.3}
In the papers \cite{Li1,Li2}, L.-C.~Li constructed certain  hyperbolic  spin RS type models.
The construction begins with  Hamiltonian systems on $\cT_\reg^\C \times G_\R^\C \times \cT_\reg^\C$,
which is a coboundary dynamical Poisson groupoid based on a classical dynamical $R$-matrix.
The $R$-matrix in question is a map from $\cT_\reg^\C$ to $\mathrm{End}(\cG_\R^\C)$ that can be chosen
to be a natural extension of the $R$-matrix in (\ref{RQ}).
The next step involves restriction to the fixed point set of a suitable Poisson involution,
and subsequent Hamiltonian reduction.
In the case described in most detail in \cite{Li2}, the fixed point set has the form $\ri \cT_\reg \times \fH$, where
$\fH \subset G_\R^\C$ is the fixed point set of the inverse of the Cartan involution, i.e., the set of Hermitian
elements. Upon analytic continuation from hyperbolic to trigonometric functions,
the equations of motion, worked out in \cite{Li1,Li2} for $G^\C=\SL(n,\C)$, coincide with the equations presented above.
Moreover, commuting Hamiltonians are obtained in \cite{Li2} from the $G$-invariant
functions on $\fH$, similarly to our usage of $C^\infty(\fP)^G$.
However, because the constructions are very different,   it is not clear
at present whether the reduced Poisson structures are also related by analytic continuation.
We expect that this is the case, but it requires further effort to show it.
We note that neither the distribution of the variables as $q,p$ and $S$, nor
the connection between the spin RS type models and
spin Sutherland models  appear in \cite{Li1,Li2}.
\end{remark}

\subsection{Integrals of motion and Lax equations}

We noted in Section 4 that the unreduced free system is degenerately integrable.
The unreduced free Hamiltonians form the Abelian Poisson algebra
\be
C_I(\cM) = \Lambda_R^*\left(C^\infty(B)^G\right).
\ee
The elements of $C_J(\cM)$ (\ref{N3}) represent integrals of motion,
since they Poisson commute with every element of $C_I(\cM) \subset C_J(\cM)$.
It is known that, generically, degenerate integrability is inherited under
Hamiltonian reduction based on a compact symmetry group.
In the $C^\infty$ category, it is also known that degenerately
integrable systems are integrable in the usual Liouville sense, too.
For precise statements, see the reviews \cite{J,Zung2} and references therein.
The arguments of Reshetikhin \cite{Res1,Res2},  which show degenerate integrability by proving
the existence of enough analytic integrals of motion  on a dense open subset
of the reduced phase space, can be also adapted to our case.

However, for a concrete integrable system one is not content with existence statements,
but would like to obtain the required integrals of motion in explicit form.
The elements of $C_I(\cM)$ are  the $G$-invariant functions of the `unreduced Lax matrix' $b_R b_R^\dagger$,
and they
descend to the reduced phase space, since $C_I(\cM) \subset C^\infty(\cM)^G$.
Now we exhibit a large class of $G$-invariant
elements of $C_J(\cM)$.
To this end, let us consider an arbitrary polynomial
\be
\cP(b_R b_R^\dagger, g_R^{-1} b_R b_R^\dagger g_R)
\ee
in the two non-commutative variables that belong to $\fP$ (\ref{H15}).
Take an arbitrary finite dimensional representation  $\rho$ of $G_\R^\C$, whose restriction to  $G$ is unitary.
Then the real an imaginary parts of the function $\cF$, given on $\cM$ by
\be
\cF(K,S):=
\tr_\rho\bigl(\cP(b_R b_R^\dagger, g_R^{-1} b_R b_R^\dagger g_R)\bigr),
\ee
are $G$-invariant elements of $C_J(\cM)$.
In order to see this, one may use the relation
\be
g_R^{-1} b_R b_R^\dagger g_R = b_L^{-1}  (b_L^{-1})^\dagger,
\ee
which follows from (\ref{H8}),
and the fact that $b_R b_R^\dagger $ transforms in the same way as $g_R$ (\ref{N28}).
These $G$-invariant integrals of motion
descend to smooth (even real analytic, and in a certain sense algebraic) functions
on the full reduced phase space. They are generalizations of the conserved
quantities $\tr_\rho(\cP(J, g J g^{-1}))$  displayed in Section 2.

Upon imposing the moment map constraint and working in
the diagonal gauge, where $g_R=Q \in \T^o$,
we obtain
\be
g_R^{-1} b_R b_R^\dagger g_R = Q^{-1} L Q
\quad\hbox{with}\quad L\equiv b_R b_R^\dagger.
\label{Zrel}\ee
Thus the above constants of motion take the form
$\tr_\rho\left( \cP(L, Q^{-1} L Q)\right)$.
Special cases of these conserved quantities  can be generated
by evaluation of the trace on powers of the following spectral parameter (denoted by $\lambda$) dependent
Lax matrix:
\be
\cL := L + \lambda Q^{-1} L Q.
\label{Lax12}\ee
Working in the diagonal gauge, the Hamiltonian vector field of $H\in C_I(\cM)$,
for which $H(K,S) = h(b_R)$, implies the  Lax equation
\be
\dot{\cL} = [ Y, \cL].
\label{Laxeqs12}\ee
Here, $Y$ is the $\cG$-valued function responsible for consistency with the (partial) gauge fixing.
Equation (\ref{Laxeqs12}) follows directly from (\ref{dotQL}) using that $[\cV(L), L]=0$  because
$h\in C^\infty(B)^G$.

It is an open  problem if the above exhibited integrals of motion are sufficient for the degenerate integrability
of the reduced system.
The question of Liouville integrability with real analytic integrals of motion also
requires further study.

\section{Discussion and outlook}

In this paper we applied Hamiltonian reduction to  `free systems'
on Heisenberg doubles, generalizing the derivation of the trigonometric spin Sutherland
models (\ref{1}) based on the corresponding cotangent bundles.
The reduced systems that we obtained appear
to be related by analytic continuation to
certain  hyperbolic spin RS type models introduced
earlier by L.-C.~Li in \cite{Li2}\footnote{Incidentally,
we noticed the connection with \cite{Li2}
only during the completion of this manuscript.}.
The different approaches have various advantages with respect to each other.

Our approach can, in principle,
accommodate action-angle duals of all of our systems.
Indeed, the unreduced phase space $\cM$ supports \emph{two} natural Abelian Poisson algebras,
namely $\Lambda_R^*(C^\infty(B)^G)$ and $\Xi_R^*(C^\infty(G)^G)$,
which define two degenerately integrable systems.
We worked out a model of the reduced phase space whereby the reduction
of $\Lambda_R^*(C^\infty(B)^G)$ represents many-body Hamiltonians of spin RS type, and $\Xi_R^*(C^\infty(G)^G)$
reduces to their position variables.
There should exist another model of the same reduced phase space,
making it possible to view the reduction of $\Xi_R^*(C^\infty(G)^G)$ as many-body Hamiltonians with
their position variables descending from   $\Lambda_R^*(C^\infty(B)^G)$.
This is the standard way to produce pairs of many-body systems in action-angle duality via reduction,
 see e.g.~\cite{AF,FK2,N,Res1}.
 Another potential advantage of our method is that it automatically leads to complete flows of the interesting Hamiltonians
on the full reduced phase space, simply since  the unreduced flows are complete.
This may not hold in the framework of \cite{Li1,Li2}, since it relies on dynamical
$R$-matrices, whose domain of definition requires a restriction
of the variables. In our present work, the restriction to regular elements of $G$
was merely a technical convenience, and we shall investigate the global
features of our reduced systems in the future.

Here, it is proper to stress that the method of \cite{Li1,Li2} is applicable for a large class of dynamical $R$-matrices,
not only for the standard trigonometric $R$-matrix, which is related to our case.
We plan to report the precise relationship between our systems and those in \cite{Li2} in a future publication.

In passing, we wish to mention  that other approaches are also available for constructing
trigonometric/hyperbolic spin RS type systems.  In particular, the moduli space of
flat $G^\C$ connections on a torus with a hole is believed to support such systems
at arbitrary monodromy.  These moduli spaces can be described, e.g.,  by quasi-Hamiltonian reduction
or by the approach of Fock and Rosly \cite{FR}. See also Section 6 in \cite{Res2}, and references therein.
Our systems should be related to real forms of these holomorphic systems.

We gave a linear algebraic algorithm for obtaining the solutions of our spin RS type models,
and advanced some arguments in favour of their degenerate integrability.
It could be
interesting to further explore their integrability properties.
The ultimate aim would be to exhibit the required number of (real analytic)
integrals of motion in explicit form on every symplectic strata \cite{SL,Sn} of $\cM_\red$.

The present work can be  generalized in such a way to obtain Poisson-Lie
analogues of the spin Sutherland models based on the cotangent lifts
of twisted conjugations acting on $G$ \cite{FP1}. Moreover, it should be
possible to use our approach as the starting point
for developing the quantum mechanics of the spin RS type systems.
For this, one should generalize the quantization of the spin Sutherland
models that proceeds via quantum Hamiltonian reduction and
related methods of harmonic analysis \cite{EFK,FP2}.

Another organic continuation of the present  work, whose implementation
is in progress in collaboration with I.~Marshall,  consists in deriving Poisson-Lie analogues
of the real, trigonometric Gibbons--Hermsen model \cite{GH}.
For this purpose, one needs to reduce a phase space of the form
\be
\GL(n,\C) \times \C^n \times \cdots \times \C^n
\ee
with an arbitrary number, $k\geq 2$,  of copies of $\C^n$. Here, $\GL(n,\C)$ is
the Heisenberg double of the Poisson-Lie group $\UN(n)$, and $\C^n$
carries a symplectic structure that enjoys $\UN(n)$ Poisson-Lie symmetry \cite{Zak}.
The pertinent moment map constraint can be written as
\be
\Lambda_L \Lambda_R \Lambda_1^{\C^n}\Lambda_2^{\C^n} \cdots \Lambda_k^{\C^n} = e^\gamma \1_n,
\ee
where $\gamma$ is a non-zero real constant, and the $\Lambda_i^{\C^n}$ ($i=1,\dots ,k)$ are
Poisson-Lie moment maps on
the independent copies of $\C^n$.
At the level of the  equations of motion,
we have proved that
this construction yields the trigonometric real form of the spin RS systems
studied earlier in \cite{AF,KZ,Rag},  but still have not completed
the description of the resulting reduced Poisson brackets.
For recent derivations of the Poisson structure of the Krichever--Zabrodin type trigonometric spin RS models
in the complex holomorphic case,
we refer to the  papers   \cite{CF2, AO}.  Our construction should lead to a real form of the Poisson structure
obtained in these papers.

\bigskip

\begin{acknowledgements}
 I wish  to thank  O. Chalykh, M. Fairon, T.F. G\"orbe, B. Jovanovic, I. Marshall, B.G. Pusztai  and
N. Reshetikhin for useful discussions.
I also thank  A. Alekseev, G. G\' evay and R. Sjamaar for suggesting
the train of thought described in  Remark 4.5.
 This work was supported  by the
Hungarian Scientific Research Fund (OTKA) under
 K-111697.
\end{acknowledgements}


\begin{thebibliography}{99}

\bibitem{AM}
A. Alekseev and A. Malkin,
{\it Symplectic structures associated to Lie-Poisson groups},
Commun. Math. Phys. {\bf 162} (1994) 147-174;
{\tt arXiv:hep-th/9303038}

\bibitem{AMe}
A. Alekseev and E. Meinrenken,
{\it Ginzburg--Weinstein via Gelfand--Zeitlin},
J.  Differential Geom. {\bf 76} (2007) 1-34;
{\tt arXiv:math/0506112 [math.DG]}

\bibitem{AF}
G.E. Arutyunov and S.A. Frolov,
{\it On Hamiltonian structure of the spin Ruijsenaars--Schneider model},
 J. Phys. A {\bf 31} (1998) 4203-4216;
 {\tt arXiv:hep-th/9703119}

\bibitem{AO}
G. Arutyunov and E. Olivucci,
{\it Hyperbolic spin Ruijsenaars--Schneider model from Poisson reduction},
{\tt  arXiv:1906.02619}


 \bibitem{BH}
 H.W. Braden and N.W. Hone,
 {\it Affine Toda solitons and systems of Calogero--Moser type},
Phys. Lett. B {\bf 380} (1996) 296-302;
 {\tt arXiv:hep-th/9603178}

\bibitem{Cal}
 F. Calogero,
{\it Solution of the one-dimensional N-body problem with quadratic and/or
inversely quadratic pair potentials},
J. Math. Phys. {\bf 12} (1971) 419-436

\bibitem{C}
O. Chalykh,
{\it Quantum Lax pairs via Dunkl and Cherednik operators},
Commun. Math. Phys. {\bf 369} (2019) 261-316;
 {\tt arXiv:1804.01766}

\bibitem{CF1}
O. Chalykh and M. Fairon,
{\it Multiplicative quiver varieties and generalised Ruijsenaars--Schneider models},
 J. Geom. Phys. {\bf 121} (2017) 413-437;
 {\tt arXiv:1704.05814 [math.QA]}

 \bibitem{CF2}
O. Chalykh and M. Fairon,
{\it On the Hamiltonian formulation of the trigonometric spin Ruijsenaars--Schneider system},
 {\tt arXiv:1811.08727 [math-ph]}


\bibitem{CP}
V. Chari and A. Pressley,
A Guide to Quantum Groups,
Cambridge University Press, 1994

\bibitem{E}
P. Etingof, Calogero--Moser Systems and Representation Theory,
 European Mathematical Society, 2007

\bibitem{EFK}
 P.I. Etingof, I.B. Frenkel and A.A. Kirillov Jr.,
{\it Spherical functions on affine Lie groups},
Duke Math. J. {\bf 80} (1995) 59-90;
{\tt  arXiv:hep-th/9407047}

 \bibitem{EV}
 P. Etingof and A. Varchenko,
 {\it  Geometry and classification of solutions of the classical dynamical Yang--Baxter equation},
Commun. Math. Phys. {\bf 192} (1998) 77-120;
 {\tt arXiv:q-alg/9703040}

\bibitem{FK1}
L. Feh\'er and C. Klim\v c\'\i k,
{\it Poisson-Lie generalization of the Kazhdan--Kostant--Sternberg reduction},
Lett. Math. Phys. {\bf 87} (2009) 125-138;
{\tt arXiv:0809.1509 [math-ph]}

\bibitem{FK2}
L. Feh\'er and C. Klim\v c\'\i k,
{\it Poisson-Lie interpretation of trigonometric Ruijsenaars duality},
Commun. Math. Phys. {\bf 301} (2011) 55-104;
 {\tt arXiv:0906.4198 [math-ph]}

\bibitem{FM1}
L. Feh\'er  and I. Marshall,
{\it The non-Abelian momentum map for Poisson-Lie symmetries on the chiral WZNW phase space},
Int. Math. Res. Not. {\bf 2004} 2611-2636;
{\tt arXiv:math/0401226}

\bibitem{FP1}
L. Feh\'er and B.G. Pusztai,
{\it Spin Calogero models obtained from dynamical r-matrices and geodesic motion},
Nucl. Phys. B {\bf 734} (2006) 304-325;
{\tt arXiv:math-ph/0507062}

\bibitem{FP2}
L. Feh\'er and B.G. Pusztai,
{\it Twisted spin Sutherland models from quantum Hamiltonian reduction},
J. Phys. A: Math. Theor. {\bf 41} (2008) 194009; 	
{\tt arXiv:0711.4015 [math-ph]}

\bibitem{FR}
V.V. Fock and  A.A. Rosly,
{\it Poisson structure on moduli of flat connections on Riemann surfaces and the $r$-matrix},
 pp.~67-86 in: Moscow Seminar in Mathematical Physics,
AMS Transl. Ser. 2, Vol.~191,
 American  Mathematical Society, 1999;
 {\tt arXiv:math/9802054 [math.QA]}


\bibitem{GH}
J. Gibbons and T. Hermsen,
{\it A generalisation of the Calogero--Moser system},
Physica D {\bf 11} (1984) 337-348

\bibitem{GW}
V.L. Ginzburg and A. Weinstein,
{\it  Lie-Poisson structure on some Poisson Lie groups},
J. Amer. Math. Soc. {\bf 5} (1992) 445-453

 \bibitem{ILLS}
M. Isachenkov, P. Liendo, Y. Linke and V. Schomerus,
{\it Calogero--Sutherland approach to defect blocks},
JHEP10(2018)204;
{\tt arXiv:1806.09703 [hep-th]}

\bibitem{J}
B. Jovanovic,
{\it Symmetries and integrability},
Publ. Institut Math. {\bf 49} (2008) 1-36;
{\tt arXiv:0812.4398 [math.SG]}


\bibitem{KLOZ}
 S. Kharchev, A. Levin, M. Olshanetsky and  A. Zotov,
 {\it Quasi-compact Higgs bundles and Calogero--Sutherland systems with two types spins},
 Journ. Math. Phys. {\bf 59} (2018) 103509;
 {\tt arXiv:1712.08851 [math-ph]}

\bibitem{Kli}
C. Klim\v c\'\i k,
{\it On moment maps associated to a twisted Heisenberg double},
 Rev. Math. Phys. {\bf 18} (2006) 781-821;
 {\tt  arXiv:math-ph/0602048}

\bibitem{K}
A.W. Knapp, Lie Groups Beyond an Introduction, Second Edition,
Birkh\"auser,  2002

\bibitem{KS}
L.I. Korogodski and Y.S. Soibelman,
Algebras of Functions on Quantum Groups: Part I,
American Mathematical Society, 1998

\bibitem{KZ}
 I. Krichever and A. Zabrodin,
{\it Spin generalization of the Ruijsenaars--Schneider model, non-abelian 2D
Toda chain and representations of Sklyanin algebra},
Russian Math. Surveys {\bf 50} (1995) 1101-1150;
{\tt arXiv:hep-th/9505039}


\bibitem{Li1}
L.-C. Li,
{\it Coboundary dynamical Poisson groupoids and integrable systems},
Int. Math. Res. Not. {\bf 2003} 2725-2746

\bibitem{Li2}
L.-C. Li,
{\it Poisson involutions, spin Calogero--Moser systems associated
with symmetric Lie subalgebras and the symmetric space spin Ruijsenaars--Schneider models,}
Commun. Math. Phys. {\bf 265} (2006) 333-372;
{\tt arXiv:math-ph/0506025}

\bibitem{LN}
L.-C. Li and Z. Nie,
{\it Liouville integrability of a class of integrable spin Calogero-Moser systems and exponents
of simple Lie algebras},
Commun. Math. Phys. {\bf 308} (2011) 415-438;  {\tt	arXiv:1007.3898 [math-ph]}

\bibitem{LiXu}
L.-C. Li and P. Xu,
{\it A class of integrable spin Calogero--Moser systems},
Commun. Math. Phys. {\bf 231} (2002), 257-286;
{\tt arXiv:math/0105162 [math.QA]}

\bibitem{Lu}
J.-H. Lu,
{\it Momentum mappings and reduction of Poisson actions},
 pp. 209-226 in: Symplectic Geometry, Groupoids, and Integrable Systems, Springer, 1991

\bibitem{MF}
A.S. Mischenko and A.T. Fomenko,
{\it  Generalized Liouville method for integrating Hamiltonian systems},
Funct. Anal. Appl. {\bf 12} (1978) 113-125

\bibitem{M}
J. Moser,
{\it Three integrable Hamiltonian systems connected with isospectral deformations},
Adv. Math. {\bf 16} (1975) 197-220

\bibitem{Nekh}
N.N. Nekhoroshev,
{\it  Action-angle variables and their generalizations},
Trans. Moscow Math. Soc. {\bf 26} (1972) 180-197

\bibitem{N}
N. Nekrasov,
{\it Infinite-dimensional algebras, many-body systems and gauge theories},
pp. 263-299 in: Moscow Seminar in Mathematical Physics,
AMS Transl. Ser. 2, Vol.~191, American Mathematical Society, 1999

\bibitem{OP}
M.A. Olshanetsky and A.M. Perelomov,
{\it Classical integrable finite-dimensional systems
related to Lie algebras},
Phys. Rep. {\bf 71} (1981) 313-400

\bibitem{P}
A.P. Polychronakos,
{\it Physics and mathematics of Calogero particles},
J. Phys. A: Math. Gen. {\bf 39} (2006) 12793-12845;
{\tt  arXiv:hep-th/0607033}

\bibitem{Rag}
O. Ragnisco and Yu.B. Suris,
{\it Integrable discretizations of the spin Ruijsenaars--Schneider models},
 J. Math. Phys. {\bf 38} (1997) 4680-4691;
  {\tt arXiv:solv-int/9605001}

 \bibitem{Res1}
N. Reshetikhin,
{\it Degenerate integrability of spin Calogero--Moser systems and the
duality with the spin Ruijsenaars systems},
Lett. Math. Phys. {\bf 63} (2003) 55-71;
 {\tt arXiv:math/0202245 [math.QA]}

\bibitem{Res2}
 N. Reshetikhin,
{\it Degenerately integrable systems},
J. Math. Sci. {\bf 213} (2016) 769-785;
{\tt arXiv:1509.00730 [math-ph]}

\bibitem{Res3}
 N. Reshetikhin,
{\it Degenerate integrability of quantum spin Calogero--Moser systems},
Lett. Math. Phys. {\bf 107} (2017) 187-200;
 {\tt arXiv:1510.00492 [math-ph]}

\bibitem{RBanff}
S.N.M. Ruijsenaars,
{\it Systems of Calogero--Moser type}, pp. 251-352 in: Proceedings of the
1994 CRM-Banff Summer School: Particles and Fields, Springer, 1999

\bibitem{RS}
S.N.M. Ruijsenaars and H. Schneider,
{\it A new class of integrable systems and its relation to solitons},
Ann. Phys. {\bf 170} (1986) 370-405

 \bibitem{SS}
  V. Schomerus and E. Sobko,
 {\it From spinning conformal blocks to matrix Calogero--Sutherland models},
 JHEP04(2018)052; {\tt arXiv:1711.02022 [hep-th]}


 \bibitem{STS1}
  M.A. Semenov-Tian-Shansky,
{\it Dressing transformations and Poisson group actions},
Publ. RIMS {\bf 21} (1985) 1237-1260

\bibitem{STS2}
M.A. Semenov-Tian-Shansky,
{\it Integrable systems: an $r$-matrix approach},
Kyoto preprint, RIMS-1650, 2008,
www.kurims.kyoto-u.ac.jp/preprint/file/RIMS1650.pdf

\bibitem{SL}
R. Sjamaar and E. Lerman,
{\it Stratified symplectic spaces and reduction},
Ann. of Math.  {\bf 134} (1991) 375-422

\bibitem{Sn}
J. Sniatycki,
Differential Geometry of Singular Spaces and Reduction of Symmetries,
Cambridge University Press,  2013

\bibitem{S}
B. Sutherland,
{\it Exact results for a quantum many-body problem in one dimension},
Phys. Rev. A {\bf 4} (1971) 2019-2021

\bibitem{vDV}
J.F. van Diejen and L. Vinet (Editors), Calogero--Moser--Sutherland Models, Springer, 2000

\bibitem{Zab}
A. Zabrodin and A. Zotov,
{\it QKZ-Ruijsenaars correspondence revisited},
Nucl. Phys. B {\bf 922} (2017) 113-125;
{\tt arXiv:1704.04527 [math-ph]}


\bibitem{Zak}
S. Zakrzewski,
{\it Phase spaces related to standard classical $r$-matrices},
J. Phys. A: Math. Gen. {\bf 29} (1996) 1841-1857;
{\tt arXiv:q-alg/9511002}

\bibitem{Zakfirst}
S. Zakrzewski,
{\it Free motion on the Poisson SU(N) group},
J. Phys. A: Math. Gen. {\bf 30} (1997) 6535-6543;
{\tt arXiv:dg-ga/9612008}


\bibitem{Zung2}
N.T. Zung,
{\it Torus actions and integrable systems},
pp. 289-328 in: Topological Methods in the Theory of Integrable Systems,
Camb. Sci. Publ., 2006;
{\tt arXiv:math/0407455 [math.DS]}


\end{thebibliography}
\end{document}